\documentclass[10pt,]{article}
\usepackage{amsfonts,bbm}
\usepackage{epsfig,amsmath,graphics,mathabx,graphicx,mathrsfs,amsthm}
\usepackage{eepic,bm,epsfig,color,subfigure}
\DeclareMathAlphabet{\mathpzc}{OT1}{pzc}{m}{it}

\def\cstar{C^{*}}
\def\comp{\mathbb{C}}

\newcommand{\beq}{\begin{equation}}
\newcommand{\eeq}{\end{equation}}
\newcommand {\cali}[1]{{\mathcal #1}}

\newcommand{\tr}{{\tt Tr}} 
 
\newcommand{\conj}[1]{\overline{#1}}
 
\newcommand{\pmat}{\begin{pmatrix}} 
\newcommand{\emat}{\end{pmatrix}} 
\newcommand{\complex}{\mathbb{C}}
\newcommand{\real}{\mathbb{R}}
\newcommand{\norm}[1]{|\!|#1|\!|} 
\newcommand{\tensor}{\otimes}
\newcommand{\unit}{\mathbbm{1}}

\newcommand{\Sp}{{\tt sp}}
\newtheorem{propn}{Proposition}{}{}
\newtheorem{thm}{Theorem}{}{}
\newtheorem{lem}{Lemma}{}{}
\newtheorem{cor}{Corollary}{}{}
{}{}
\newtheorem{cor_t}{Corollary}[thm]{}{}
\newtheorem{Def}{Definition}{}
\newcommand{\commentout}[1]{}
\newcommand{\be}{\begin{enumerate}}
\newcommand{\ee}{\end{enumerate}}
\newcommand{\bi}{\begin{itemize}}
\newcommand{\ei}{\end{itemize}}
\newcommand{\floor}[1]{\lfloor #1\rfloor}
\begin{document}
\title{An algebraic framework for information theory: Classical Information} 
\author{Manas K. Patra and Samuel L. Braunstein \footnote{{\bf email}:\{manas,schmuel\}@cs.york.ac.uk}\\
Department of Computer Science \\ University of York, York YO10 5DD, UK}
\date{}
\maketitle
\begin{abstract}
This work proposes a complete algebraic model for classical information theory. As a precursor the essential probabilistic concepts have been defined and analyzed in the algebraic setting. Examples from probability and information theory demonstrate that in addition to theoretical insights provided by the algebraic model one obtains new computational and anlytical tools. Several important theorems of classical probahility and information theory are formulated and proved in the algebraic framework.
\end{abstract}
\section{Introduction}
The present paper proposes an algebraic model of classical information theory. We then carry out a detailed investigation of the model. The connection between operator algebras and information theory---both classical and quantum---have appeared in the scientific literature since the beginnings of information theory and operator algebras---both classical and quantum (see e.g.\ \cite{Umegaki4, Segal60,Lindblad,araki75,Keyl,Beny,Kretschmann}). The standard formulation of classical information theory \cite{Ash, CoverT} on the other hand is sometimes seen as an important application of probability theory. Thus probabilistic concepts like distribution function, conditional expectation and independence are vital for the development of information theory. Most previous work including those mentioned above focus on some aspects of information theory, especially the noncommutative generalizations of the concepts of entropy and for specific probabilistic concepts they often resort to a representation on some Hilbert space. As a consequence, there does not appear to be a unified coherent approach based on intrinsically algebraic notions. The construction of such a model is one of the goals of the paper. As probabilistic concepts play such an important role in the development of information theory we devote a fairly large section to  an algebraic approach to probability. It was I. E. Segal \cite{Segal54}, one of the major players in the early development of operator theory who first proposed such an algebraic approach to probability theory. Although we have mostly restricted ourselves to the discrete case, sufficient for our models of communication and information processes, our proposed model is different from Segal's. We believe several aspects of our approach are novel (see the section-wise synopsis below) and yield deeper insights to information processes. 
  
A strong motivation for this paper is the relatively young field of quantum information theory. It is almost folklore that in quantum mechanics we are forced to deal with noncommutative entities. Thus, the language of $\cstar$ algebras, already known to physicists for decades \cite{Haag,Emch} as ``the algebra of observables''  on which many extensions of classical probabilistic concepts can be made, became a natural setting for quantum information. As a complex quantum information scheme or protocol has several {\em classical} components (e.g. classical communication, coin-tosses etc.) it is important that we have a unified model and a single language for quantum and classical information. Such a formulation will be of great help in the difficult task of protocol analysis. Besides a unified framework will be of significant advantage for theoretical analysis. For example, a deeper study of quantum phenomena like (no) quantum broadcasting \cite{Barnum}, quantum Huffman coding \cite{Braunstein}, channel capacity \cite{Schumacher} to name a few would benefit from the investigations of these structures. 
In this framework we may view a classical process as a special type of process described by {\em commuting} elements. Therefore, it seems appropriate to investigate this special case first. As we will show the classical structure is quite rich and sheds new light on some familiar aspects of information theory. There is yet another reason. In quantum mechanics we have several examples of observables taking only a finite number of values (the spectrum is finite). But in classical mechanics all variables take on a continuum of values. Therefore, we often see statements like ``a finite-dimensional operator like spin is a purely quantum phenomenon that has no classical analogue''. However, when we talk about information systems finite-dimensional quantum systems have obvious classical analogues. A 2-dimensional quantum ``source'' corresponds to a classical binary source. Our investigations raise some questions about the possibility of an alternative formulation of probability theory with a more algebraic flavour \cite{Segal54}. This is interesting in itself. But it is a side issue in this paper and will only be briefly commented upon. Since our main concern is the mathematical models of information processing systems we will be primarily dealing with discrete systems, thus circumventing some tricky topological  issues. 

Let us recall a simple model of a communication system proposed by Shanon \cite{Shannon48,ShannonWeaver}. 
\commentout{
\begin{figure}[h] \label{fig1}
\centering
\includegraphics[height=2.5cm,width=1.0\textwidth]{commModel.jpg}
\caption{A simple model of communication system}
\end{figure}
}
This model has essentially four components: source, channel, encoder/decoder and receiver. The source could be representing very different kinds of objects: a human speaker, a radar antenna or a distant star. We usually have some model of the source. The coding/decoding operation is required for three basic reasons: i) the source/receiver alphabet and the channel alphabet may be different, ii) to maximize the rate of information communication and ii) to detect and correct errors due to noise and distortion. Some amount of noise affects every stage of the operation. So the behavior of components are generally modeled as stochastic processes. This is valid in both the classical and often quantum communication processes. The difference, of course, is in the description of the two processes. As in any stochastic process, we specify the source by a family $X_t$ of {\em random variables} and the various stages of the communication system are modeled as (stochastic) transformations of these variables. The parameter $t$ can be continuous or discrete. In this work our primary focus will be on discrete processes corresponding to discrete time. Thus, a discrete source can be viewed as a generator of a countable set of random variables. Let us suppose the source ``tosses'' a coin and sends a 1 if it is ``heads'' and a 0 otherwise. We may model this by a pair of random variables $\{X_h,X_T\}$ on the probability space $\{h\text{ (heads)},t\text{ (tails)}\}$ such that $X_h(h)=X_t(t)=1 \text{ and }X_h(t)=X_t(h)=0$. If the coin is unbiased we say that the {\em state} of the source is given by a probability measure $\{1/2,1/2\}$. In general it is $\{p,q\}$ where $0\leq p=1-q\leq 1$ is the probability of heads. This simple model can be generalized to more complicated sources. Besides these elementary random variables we encounter functions of these variables. Thus we are led to study {\em algebras} of random variables. Recall the standard definition of a random variable: it is a (measurable)  function on a probability space $S$. Hence, in the standard formulation we need a probability space or sample space to define our random variables or ``observables''. Let us also recall that a probability space is a {\em triple} $(S, \cali{M}, \mu)$, where $S$ is a set, the set of elementary or atomic events, $\cali{M}$ is a $\sigma$-algebra of subsets and $\mu$ is the probability measure. Thus if $\{A_n\}$ is a sequence of mutually disjoint elements from $\cali{M}$ then
\[ \mu(\bigcup_n A_n)=\sum_{n=1}^{\infty} \mu(A_n). \]
Moreover, $\mu(S)=1$ and $\mu(B)\geq 0$ for any $B\in \cali{M}$. These are essentially the Kolmogorov axioms. A real or complex valued random variable is a measurable function form $S\rightarrow \real$ or 
$S\rightarrow \complex$. Here measurability is with respect to the Borel $\sigma$-algebra of $\real$ or $\complex$. We recall that the Borel $\sigma$-algebra of any 
topological space is generated by its open sets. So in some sense in this formulation the probability space is fundamental and the notion of random variables is based on the former. However, from an observer/experimenter point of view the random variables are the basic entities because these are precisely the observables. In statistical theories like information theory it is the set of random variables and their distributions and transformations which are of primary interest. Of course, to compute the probability distributions of the random variables we have to appeal to the original probability space. But once the distributions have been determined for almost all computations they suffice and the underlying probability or sample space plays little role. The fundamental theorem of Kolmogorov \cite{Billingsley, Shiryayev} guarantees that given a set of random variables and their distributions satisfying certain consistency conditions we can reconstruct a probability space giving these distributions. These observations suggest that we take the algebra of random variables or observables as our primary structure and derive all relevant quantities from this structure. One of the advantages is that we deal with a smaller spaces restricted to quantities of interest. In the modeling of security protocols this a more realistic approach since different participants have access to different sets of observables and may assign different probability structures on the same set of events. 

In the quantum case there are more fundamental reasons for working with the algebras of observables. We will not go into these here. The current work is an attempt at formulating (classical) information theory in an algebraic framework. We will mainly focus on $\cstar$ and von Neumann algebras. We will see that most interesting spaces of observables do have a $\cstar$ structure. As mentioned before, we will be dealing with discrete spaces in this work. We also observe that $\cstar$ algebras have been studied intensively since the pioneering works of Murray, von Neumann, Gelfand, Naimark and Segal and others starting from 1930s. As we stated at the beginning of this section, several probabilistic and information theoretic concepts like conditional expectation, entropy, differential entropy have previously been investigated in the algebraic context. However to the best of our knowledge there is no work investigating information and communication theory in a purely algebraic framework. Our investigations indicate that most if not all important concepts and constructs of information theory can be dealt with in the algebraic framework. The paper is structured as follows. 

In Section \ref{sec:algebra} we give the basic definitions of the algebras of interest. This section is fairly detailed as we provide proofs of several structure  theorems for finite-dimensional abelian $\cstar$ algebras and their tensor products, possibly infinite. There are two reasons for this. The first is to make the paper as self-contained as possible. The second reason is to demonstrate the power and utility of the algebraic techniques. Moreover, we believe that in these special cases some of the proofs are new. We also give several examples. 

Section \ref{sec:prob} gives an account of probabilistic concepts from an algebraic perspective. In particular, we investigate the fundamental notion of independence and demonstrate how it relates to the algebraic structure. We note that there is a very sophisticated theory of noncommutative or ``free probability''\cite{Voic}. Our approach in the simpler commutative case is different in several aspects. One important point in which our approach seems novel is the definition of a probability distribution function. The definition we give is algebraic in the sense that it depends on the intrinsic properties of the algebra.  Specifically, we define a probability distribution function as the weak limit of a net or sequence of elements in a subalgebra representing an approximate identity of an ideal or a subalgebra. To illustrate the practical use of these techniques we give some typical examples from standard probability theory. The problem of ``waiting time'' shows that the algebraic approach can offer new techniques and insights. Finally, using the definition of distribution function and some other constructs we formulate and prove some of the basic limit theorems in this framework. These are used later in proving results in information theory. 

In Section \ref{sec:info} we give a precise algebraic model of information communication system. The fundamental concept of entropy is introduced as a limiting value of typical sequences of the algebra. The notion of typical sequence comes from the limit theorems. In the conventional approach the limit is taken in the probability (convergence in measure). In our algebraic case it corresponds to a weak limit. The point is, we can do all this in purely algebraic setting. We also define and study the crucial notion of a channel. In particular, the {\em channel coding theorem} is presented as an approximation result. Stated informally, 

{\em Every channel other than the useless ones can be approximated by a lossless channel under appropriate coding.} 

In the final section we summarize our constructions and discuss future work. 

\section{Algebraic Preliminaries} \label{sec:algebra}
An algebra $\cali{A}$ is vector space over a field $\tt{F}$ with an associative bilinear product: $\cali{A}\times \cali{A}\rightarrow \cali{A}$. 
We take ${\tt F}=\comp$, the field of complex numbers. We deal mostly with {\em unital} algebras, that is, algebras with a unit 
$\unit$. A Banach algebra is an algebra with a non-negative real function $\norm{}$ on $\cali{A}$ such that
\[
\begin{split}
\norm{x} & \geq 0 \text{ and  $\norm{a}=0$, iff $a=0$} \\
\norm{x+y} &\leq \norm{x}+\norm{y} \text{ (triangle inequality)}\\
\norm{xy} & \leq \norm{x}\norm{y} \text{ (Banach property) } \\ 
\end{split}
\]
and $\cali{A}$ is {\em complete} in the topology defined by the norm. A $\cstar$ algebra $B$ is a Banach algebra with 
an anti-linear involution $^*$ (a map $\sigma$ is an involution if $\sigma^2=1$, it is antilinear if $\sigma(x+c y)=\sigma(x)+\conj{c}\sigma(y),\; c \text{ a complex number}$) such that 
\[\norm{xx^*}=\norm{x}^2\text{ and } (xy)^*=y^*x^*\forall x,y \in B\] 
This implies that $\norm{x}=\norm{x^*}$. The quintessential examples of a $\cstar$ algebra are the norm-closed subalgebras of $\cali{L}(H)$, 
the set of bounded operators on a Hilbert space of $H$. The fundamental Gelfand-Naimark-Segal ({\bf GNS}) theorem states that 
every $\cstar$ algebra can be isometrically embedded in some $\cali{L}(H)$. The notion of the spectrum  of an operator 
has an algebraic analogue without reference to the representation space. The resolvent of an element $x$ in the $\cstar$ algebra $B$ is 
the set $R(x)\subset \comp$ such that $\lambda \in R(x)$ implies $\lambda\unit -x$ is invertible. The spectrum $\Sp(x)$ is the complement of the resolvent. 
The spectrum is a nonempty closed and bounded subset and hence compact. Define $r(x)=\sup\{|\lambda|: \lambda \in \Sp(x)\}$, the spectral radius. A 
basic result states that 
\[ r(x)=\lim_{n\rightarrow \infty} \norm{x^n}^{1/n}.\]
An element $x$ is self-adjoint if $x=x^*$, normal if $x^*x=xx^*$ and positive (strictly positive) if $x$ is self-adjoint and $\Sp(x)\subset [0,\infty)((0,\infty))$. A self-adjoint
element has a real spectrum and conversely. Since $x=(x+x^*)/2+i(x-x^*)/2i$ any element of a $\cstar$ algebra can be decomposed into self-adjoint ``real'' ($(x+x^*)/2$) and ``imaginary''  ($(x-x^*)/2i$) parts. For a self-adjoint element $x$, $r(x)=\norm{x}$. 
Thus a positive element is self-adjoint. The positive elements define a partial order on $A$. Thus 
$x\leq y$ iff $y-x\geq 0$ (positive). An important property of positive elements is that they have unique square-roots. Thus if $a\geq 0$ there is 
a unique element $b \geq 0$ such that $b^2=a$. We write $\sqrt{a}\text{ or } a^{1/2}$ for the square-root. Since $x^*x\geq 0$ it has a unique square-root. If $x$ is normal we write $|x|=\sqrt{x^*x}$. In particular, if $x$ is self-adjoint, $|x|=\sqrt{x^2}$.  A self-adjoint element $x$ has a decomposition $x=x_+-x_-$ into positive and negative parts where \(x_+=(|x|+x)/2\ \text{ and } x_-=(|x|-x)/2)\) are positive. An element $p\in B$ is a projection 
if $p$ is self-adjoint and $p^2=p$. Given two $\cstar$-algebras $A$ and $B$ a homomorphism $F$ is a linear map preserving the product 
and $^*$ structures. It is continuous iff bounded. A continuous isomorphism of $\cstar$ algebras is an isometry (norm preserving). 
A homomorphism is positive if it maps positive elements to positive elements. A (linear) functional on $A$ is a linear map $A\rightarrow \comp$. 
The GNS construction starts with a  positive  functional (mapping positive elements to non-negative numbers) on $B$. The details may be found in \cite{KR1,Tak1}. A positive functional $\omega$ such that $\omega(\unit)=1$ is called a {\em state}. The set of states $G$ is convex. The extreme points 
are called {\em pure states} and $G$ is the convex closure of pure states (Krein-Millman theorem). A set $B\subset A$ is called a subalgebra if it is a 
$\cstar$ algebra with the inherited product. That is, it is a subalgebra in the algebraic sense and it is {\em closed} in the norm topology. 
A subalgebra is $B$ called unital if it contains the identity of $A$.
Our primary interest will be on {\em abelian} (also called commutative) algebras. The structure theory 
is a bit different in this case. Of course, the GNS construction is valid and the elements of the algebra act as multiplication 
operators on the representing Hilbert space. However, there is an alternative representation in the abelian case 
due to Gelfand and Naimark which will be of primary interest to us. To motivate it consider an example. 

Let $X$ be a compact Hausdorff topological space, for example, a closed and bounded set in $\real^n$. Let $C(X)$ denote the space continuous 
complex functions on $X$. It includes the constant functions. If we define addition and multiplication point-wise
\beq \label{eq:exampBasic1}
\begin{split}
(f+g)(x)&=f(x)+g(x),\; (fg)(x)=f(x)g(x)\text{ and } \\
\norm{f}&=\sup_{x\in X} |f(x)|\; \forall f,g\in C(X)
\end{split}
\eeq 
then $C(X)$ becomes a complex Banach algebra. If we define $f^*(x)=\conj{f(x)}$ then $C(X)$ is an abelian $\cstar$ algebra. This is a prototype of abelian $\cstar$ algebras \cite{KR1}. One can generalize to (essentially) bounded measurable functions on measure spaces with appropriate norm. However, for the purposes of this paper it suffices to consider compact spaces with measures defined on Borel $\sigma$-algebras. We will dwell more on this point in the next section. A complex function (not necessarily continuous) is called simple if its range is finite. For example, the {\em indicator} function $I_S$ of a subset $S\subset X$, given by $I_S(x)=1 \text{ if } x\in S$ and 0 otherwise is a simple function. It is not continuous unless $S=X$ or $S$ is a connected component. Simple functions play a crucial role in probability and integration theory. From their definition it follows that the projections in $C(X)$ are precisely the indicator functions. The constant functions 1 and 0 are both projections corresponding to $S=X\text{ and } \emptyset$ resp. These are the only projections in $C(X)$ if $X$ is connected. The basic structure theorem for abelian $\cstar$ algebras is the following. 
\begin{thm}
An abelian $\cstar$ algebra with unity is isomorphic to the algebra $C(X)$ for a compact Hasudorff space $X$. The isomorphism is an isometry (norm preserving). 
\end{thm}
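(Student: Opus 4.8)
The plan is to realize $X$ as the space of characters of the abelian $\cstar$ algebra $\cali{A}$ and to identify $\cali{A}$ with $C(X)$ through the Gelfand transform. First I would take $X$ to be the set of nonzero multiplicative linear functionals (characters) $\phi:\cali{A}\rightarrow\comp$, equipped with the weak-$*$ topology inherited from the dual space. Since $\cali{A}$ is unital every character satisfies $\phi(\unit)=1$, and because $\phi(a)\in\Sp(a)$ one has $|\phi(a)|\leq r(a)\leq\norm{a}$, so $X$ lies in the closed unit ball of the dual. That ball is weak-$*$ compact by Banach--Alaoglu, and the defining conditions $\phi(ab)=\phi(a)\phi(b)$ and $\phi(\unit)=1$ are preserved under weak-$*$ limits; hence $X$ is a closed subset of the ball and therefore compact Hausdorff.

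Next I would define the Gelfand transform $\Gamma:\cali{A}\rightarrow C(X)$ by $\Gamma(a)(\phi)=\hat a(\phi)=\phi(a)$. Each $\hat a$ is continuous by construction of the weak-$*$ topology, and $\Gamma$ is manifestly an algebra homomorphism. Two facts make it a $*$-homomorphism: a self-adjoint element has real spectrum, and $\phi(a)\in\Sp(a)$, so $\phi$ carries self-adjoint elements to real numbers; splitting a general element into its self-adjoint real and imaginary parts then gives $\Gamma(a^*)=\conj{\Gamma(a)}$.

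The central step is to show $\Gamma$ is an isometry. For this I would first establish that the range of $\hat a$ is exactly $\Sp(a)$, so that $\norm{\hat a}_\infty=r(a)$; this is where commutativity is essential, since it rests on the correspondence between characters and maximal ideals together with the fact (via Gelfand--Mazur) that $a-\lambda\unit$ fails to be invertible precisely when some character annihilates it. Granting this, for self-adjoint $a$ the excerpt gives $r(a)=\norm{a}$, and for general $a$ the $\cstar$-identity yields $\norm{\hat a}_\infty^{2}=\norm{\conj{\hat a}\,\hat a}_\infty=\norm{\widehat{a^*a}}_\infty=r(a^*a)=\norm{a^*a}=\norm{a}^{2}$, so $\Gamma$ preserves the norm.

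Finally, isometry makes $\Gamma$ injective, and its image $\Gamma(\cali{A})$ is a closed subalgebra of $C(X)$ because an isometric image of the complete space $\cali{A}$ is complete. By construction this image contains the constants (the image of $\unit$), separates the points of $X$ (distinct characters differ on some element), and is closed under conjugation; Stone--Weierstrass then forces $\Gamma(\cali{A})$ to be dense, hence all of $C(X)$. Thus $\Gamma$ is the desired isometric $*$-isomorphism. I expect the main obstacle to be the lemma identifying the range of $\hat a$ with $\Sp(a)$ --- equivalently, that every maximal ideal is the kernel of a character --- since this is where the Banach-algebraic machinery (Gelfand--Mazur and the analysis of the quotient by a maximal ideal) does the real work.
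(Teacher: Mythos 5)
Your proposal is the standard Gelfand--Naimark argument via characters, the weak-$*$ topology, and the Gelfand transform, which is exactly the route the paper indicates: it identifies $X$ with the space of multiplicative functionals and defers the details to the cited reference [KR1] rather than writing them out. Your write-up correctly supplies those details (Banach--Alaoglu for compactness, real spectrum for the $*$-compatibility, the $\cstar$-identity for isometry, and Stone--Weierstrass for surjectivity), so it is a faithful completion of the same approach.
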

The main idea of the proof comes from the following observation. In any function algebra $C(X)$ for $p\in X$ the map
$\sigma_P:f\rightarrow f(p),\; f\in C(X)$ is a linear functional on $C(X)$. These are multiplicative functionals in the sense that $\sigma_p(xy)= \sigma_p(x)\sigma_p(y)$. In fact these are the only possible multiplicative functionals. The Gelfand representation for an abstract abelian $\cstar$ algebra $A$ identifies the space $X$ as the set of multiplicative functionals and gives it a topology to make these continuous. The details can be found in \cite{KR1}. 

Now let $X=\{a_1, \dotsc, a_n\}$  be a finite set with discreet topology. Then $A=C(X)$ is the set of all functions $X\rightarrow \comp$. The algebra $C(X)$ can be considered as the algebra of (complex) random variables on the finite probability space $X$. Let $x_i(a_j)=\delta_{ij},\; i,j=1,\dotsc, n$. Here $\delta_{ij}=1 \text{ if }i=j \text{ and } 0$ otherwise. The functions $x_i\in A$ form a basis for $A$. Their multiplication table is particularly simple: $x_ix_j=\delta_{ij}x_i$. They also satisfy $\sum_i x_i=\unit$. These are projections in $A$. They are orthogonal in the sense that $x_ix_i=0\text{ for }i\neq j$. We call any basis consisting of elements of norm 1 with distinct elements orthogonal {\em atomic}. A set of linearly independent elements $\{y_i\}$ satisfying $\sum_i y_i=\unit$ is said to be complete. The next theorem gives us the general structure of any finite-dimensional algebra.  
\begin{thm}\label{thm:structFinite}
Let $A$ be a finite-dimensional abelian $\cstar$ algebra. Then there is a unique (up to permutations) complete atomic basis $\cali{B}=\{x_1, \dotsc, x_n\}$. That is, the basis elements satisfy
\beq \label{eq:structFinite}
x_i^*=x_i,\; x_ix_j=\delta_{ij}x_i,\;  \norm{x_i}=1 \text{ and }\sum_i x_i =\unit,\;
\eeq
Let $x=\sum_i a_ix_i\in A$. Then $\Sp(x)=\{a_i\}$ and hence $\norm{x}=\max_i\{|a_i|\}$. 
\end{thm}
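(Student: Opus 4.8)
The plan is to argue directly from finite-dimensionality and the $\cstar$ identity, without invoking the full Gelfand representation. Everything reduces to one fact: in a finite-dimensional $\cstar$ algebra each self-adjoint $h$ is algebraically diagonalizable, i.e.\ $h=\sum_{\lambda\in\Sp(h)}\lambda\,p_\lambda$ for mutually orthogonal projections $p_\lambda$ summing to $\unit$. I would prove this first. Finite-dimensionality makes the powers of $h$ eventually dependent, so $h$ has a monic minimal polynomial $m$; its roots are real (the spectrum of a self-adjoint element is real) and are exactly $\Sp(h)$. The crux is that $m$ is squarefree. If $s(t)=\prod_{\lambda\in\Sp(h)}(t-\lambda)$ is the radical of $m$ and $m\neq s$, then $m\mid s^K$ for some $K$, so $s(h)^K=0$; since $s$ has real coefficients $s(h)$ is self-adjoint, and the $\cstar$ identity gives $\norm{s(h)^2}=\norm{s(h)}^2$ and hence $\norm{s(h)^{2^k}}=\norm{s(h)}^{2^k}$, forcing $s(h)=0$. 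This contradicts the minimality of $m$, so $m=s$ is squarefree and the Lagrange interpolants $p_\lambda=\prod_{\mu\neq\lambda}(h-\mu\unit)/(\lambda-\mu)$ are the desired orthogonal projections.

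For existence of the basis I would pick finitely many self-adjoint generators --- take any vector-space basis and replace each element by its real part $(e+e^*)/2$ and imaginary part $(e-e^*)/2i$. These commute, so their spectral projections from the previous step all commute. The products $\prod_l p^{(l)}_{\lambda_l}$, one spectral projection per generator, are commuting projections; their nonzero values $x_1,\dots,x_n$ are pairwise orthogonal, since two distinct products contain a factor $p^{(l)}_\lambda p^{(l)}_\mu=0$ with $\lambda\neq\mu$, and they sum to $\unit$ by expanding $\prod_l\big(\sum_\lambda p^{(l)}_\lambda\big)=\unit$. Each $x_i$ is a self-adjoint idempotent, so $x_i^*=x_i$ and $x_ix_j=\delta_{ij}x_i$, while $\norm{x_i}^2=\norm{x_i^*x_i}=\norm{x_i}$ gives $\norm{x_i}=1$; these are precisely the relations \eqref{eq:structFinite}. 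They span $A$ because every generator equals $\sum_\lambda\lambda\,p^{(l)}_\lambda$ with each $p^{(l)}_\lambda$ a sum of $x_i$'s, and their linear span is already closed under product and $*$; being orthogonal idempotents they are linearly independent, hence a basis with $n=\dim A$.

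Uniqueness follows from a minimality principle. Let $\{y_1,\dots,y_n\}$ be any complete atomic basis. Multiplying $\sum_j y_j=\unit$ by $y_k$ and using the orthogonality $y_jy_k=0$ ($j\neq k$) yields $y_k^2=y_k$, so each $y_k$ is idempotent. Expanding $y_j=\sum_i c_{ji}x_i$ in the basis just built, idempotence forces $c_{ji}^2=c_{ji}$, i.e.\ $c_{ji}\in\{0,1\}$, so $y_j=\sum_{i\in S_j}x_i$ for some $S_j\subseteq\{1,\dots,n\}$. Orthogonality makes the $S_j$ pairwise disjoint, completeness makes them cover $\{1,\dots,n\}$, and each is nonempty since $\norm{y_j}=1$. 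A partition of an $n$-element set into $n$ nonempty blocks is made of singletons, so $\{y_j\}$ is a permutation of $\{x_i\}$; in particular every complete atomic basis satisfies \eqref{eq:structFinite}.

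Finally, for the spectrum of $x=\sum_i a_ix_i$, observe that $\sum_i b_ix_i$ is invertible iff every $b_i\neq0$, with inverse $\sum_i b_i^{-1}x_i$; if some $b_i=0$ then $(\sum_k b_kx_k)x_i=0$ with $x_i\neq0$ exhibits a zero divisor. Applying this to $x-\lambda\unit=\sum_i(a_i-\lambda)x_i$ shows $\lambda\in\Sp(x)$ exactly when $\lambda=a_i$ for some $i$, so $\Sp(x)=\{a_i\}$. Since $x^*x=\sum_i|a_i|^2x_i$ is self-adjoint, $\norm{x}^2=\norm{x^*x}=r(x^*x)=\max_i|a_i|^2$, whence $\norm{x}=\max_i|a_i|$. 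I expect the only real obstacle to be the squarefreeness argument of the first paragraph --- it is exactly where the $\cstar$ identity must be used to rule out self-adjoint nilpotents and make self-adjoint elements diagonalizable; after that the argument is routine bookkeeping with orthogonal idempotents.
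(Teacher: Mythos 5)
Your proof is correct, but it takes a genuinely different route from the paper's. The paper also avoids the Gelfand transform, but where you diagonalize each self-adjoint element via its minimal polynomial --- using the $\cstar$ identity $\norm{s(h)^{2^k}}=\norm{s(h)}^{2^k}$ to rule out self-adjoint nilpotents and force the minimal polynomial to be squarefree --- the paper instead applies the Krein--Milman theorem to the convex set of positive elements of bounded norm and extracts the atomic basis from its extreme points, deriving idempotence and orthogonality by writing each extreme point as a convex combination such as $x_i=(2(x_i-x_i^2)+2x_i^2)/2$. Your construction of the basis as nonzero products of spectral projections of a commuting family of self-adjoint generators replaces that extremality bookkeeping and has the side benefit of establishing the spectral decomposition $h=\sum_\lambda \lambda\,p_\lambda$ for every self-adjoint $h$, which is arguably the cleaner and more reusable statement (the paper's Proposition \ref{prop:struct_annIdeal} essentially rederives it later). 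The uniqueness argument --- expand a second atomic basis in the first, force the coefficients into $\{0,1\}$ by idempotence, and note that a partition of an $n$-set into $n$ nonempty blocks consists of singletons --- and the invertibility criterion for $\sum_i b_ix_i$ are essentially identical in the two proofs; your derivation of $\norm{x}=\max_i|a_i|$ via $\norm{x}^2=\norm{x^*x}=r(x^*x)$ is in fact spelled out more completely than the paper's ``and hence''. The only step you should make explicit when writing this up is the standard fact that in a finite-dimensional unital algebra the roots of the minimal polynomial coincide with the spectrum (B\'ezout for one inclusion, minimality for the other); everything else is as routine as you claim.
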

\begin{proof}
Let $\{y_1,\dotsc,y_n\}$ be a basis for $A$. Since the self-adjoint elements $(y_i+y_i^*)/2$ and $i(y_i-y_i^*)/2$ span $A$ we can choose an independent set. Hence, we may assume that the $y_i$ are self-adjoint. Then each $y_i^2$ is positive and hence possesses a square-root $|y_i|$. Moreover, $|y_i|\geq y_i$ \cite{KR1,Bratteli1}.\footnote{Bratteli (pp. 35) gives a proof which does not use Gelfand representation.} We can therefore write each 
$y_i=(|y_i|+y_i)/2 - (|y_i|-y_i)/2$, as the difference of two positive elements. Again, choosing an independent set we may assume that $y_i$ themselves are positive with norm 1. Let $S=\{z:z\geq 0 \text{ and } \norm{z}\leq 4$. $S$ is convex and compact (being closed and bounded) and $y_i\in S$. Hence, by the Krein-Millman theorem \cite{KR1} $S$ is the convex closure of its {\em extreme points}\footnote{Recall that extreme points of a convex set are those which cannot be written as a non-trivial convex combination of some members of the set}. We may assume that these extreme points have norm 1 (obviously discarding 0). Since each $y_i$ can be written as a finite convex sum of its extreme points we can pick a basis $x_1,\dotsc,x_n$ of extreme points. We complete the proof by showing that the $x_i$'s satisfy equations (\ref{eq:structFinite}) and that they are unique. 

Now $\norm{x_i}=1$ implies that for any $|\lambda| >1$, $\lambda-x_i=\lambda(\unit-\lambda^{-1}x_i)$ is invertible. This can be proved by using the geometric series of $(1-\lambda^{-1}x_i)^{-1}$. Hence if $a\in {\tt sp}(x_i)$ then $0\leq a \leq 1$ and $\unit-x_i$ is positive. Since ${\tt sp}(x_i-x_i^2)=\{a-a^2:a\in{\tt sp}(x_i)\}$ and $a-a^2\geq 0$ it follows that $x_i-x_i^2\geq 0$. As $x_i=(2(x_i-x_i^2)+2x_i^2)/2$, a convex combination of two positive elements in $S$ and $x_i$ is a non-zero extreme point we must have $x_i-x_i^2=0$ or $x_i-x_i^2=x_i^2$. The last possibility is ruled out because it would imply $\norm{x_i}=2\norm{x_i^2}=2\norm{x_i}^2=2$. Hence $x_i^2=x_i$. To prove that they are orthogonal observe that 
$x_i -x_ix_j=x_i(1-x_j)$ is positive. Thus $x_i=(2x_i(1-x_j)+2x_ix_j)/2$ is a convex combination of points in $S$. Hence, as before either $x_ix_j=0$ or $x_i=x_ix_j$. With $x_j$ in place of $x_i$ we conclude that  $x_ix_j=0$ or $x_j=x_ix_j$. Thus the only possibility for $x_i\neq x_j$ is that $x_ix_j=0$. 

To prove the decomposition property let $\unit=\sum_ia_ix_i$. Squaring and using the orthogonality of $x_i$'s we conclude that $a_i=1 \text{ or }0$. If some $a_k=0$ then $x_k=x_k\unit=x_k\sum_ia_ix_i=0$. Hence, all $a_k=1$. Finally, let $\{z_i\}$ be another basis satisfying (\ref{eq:structFinite}). Let $z_i=\sum_j b_{ij}x_j$. As before, $b_{ij}=1\text{ or } 0$ and the matrix $(b_{ij})$ is a 0-1 matrix. For fixed $i$ let $T_i$ be the set of integers $j$ such that $b_{ij}=1$. Then $x_ix_j=0,\; i\neq j$ implies $T_i$ and $T_j$ are disjoint. This along with the last condition in (\ref{eq:structFinite}) implies that $T_i$'s form a partition of the set $\{1,\dotsc, n\}$. Thus each $T_i$ is a singleton and the matrix $(b_{ij})$ has exactly one 1 in each row and column. It is a permutation matrix. 

Let $x=\sum_i a_ix_i$ be an element of $A$. Then $\lambda\unit -x=\sum_i(\lambda-a_i)x_i$. This is invertible iff $\lambda\neq a_i,\;i=1,\dotsc,n$ with inverse $\sum_i(\lambda-a_i)^{-1}x_i$. The proof is complete. 
\end{proof}
Let us observe that we could have proved the theorem using the Gelfand representation. But the above proof is more intrinsic depending mostly on the structure of the algebra itself only. 
\begin{cor}
Let $A$ be an abelian $\cstar$-algebra satisfying the following conditions. There are finite-dimensional subalgebras $A_k,\; k=0,1,\dotsc$ with 
\[ 
A=\bigcup_{k=0}^{\infty} A_k \text{ and }A_k\subset A_{k+1}\; \forall k
\]
and for each $k$ corresponding to $A_k$ there is complementary subalgebra $A_k'\subset A_{k+1}$ such that $A_kA_k'=A_{k+1}$, $A_{k}\bigcap A_k'=\{0, \unit\}$ and for $x\in A_k,y\in A_k'$ implies $xy\neq 0$ unless $x$ or $y$ is 0. Then there is a countable basis for $A$ satisfying the first three equations in (\ref{eq:structFinite}). 
\end{cor}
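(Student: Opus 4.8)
The plan is to bootstrap from Theorem \ref{thm:structFinite}, which already furnishes a unique atomic basis for each finite-dimensional layer $A_k$, and then to track how these bases nest as $k$ grows. Write $\cali{B}_k=\{p_1^{(k)},\dots,p_{n_k}^{(k)}\}$ for the atomic basis of $A_k$ and $\cali{B}_k'=\{q_1^{(k)},\dots,q_{m_k}^{(k)}\}$ for that of the complementary algebra $A_k'$, both supplied by Theorem \ref{thm:structFinite}.

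First I would prove a \emph{refinement lemma}: the atomic basis of $A_{k+1}$ is exactly the family of products $\{p_i^{(k)}q_j^{(k)}\}$. Since $A$ is abelian and the factors are self-adjoint, each product is self-adjoint; because $p_i^{(k)}p_{i'}^{(k)}=\delta_{ii'}p_i^{(k)}$ and likewise for the $q_j^{(k)}$, the products are orthogonal idempotents; the hypothesis that $xy\neq 0$ whenever $x\in A_k$ and $y\in A_k'$ are nonzero guarantees each $p_i^{(k)}q_j^{(k)}\neq 0$, hence of norm $1$; and completeness follows from $\sum_{i,j}p_i^{(k)}q_j^{(k)}=(\sum_i p_i^{(k)})(\sum_j q_j^{(k)})=\unit$. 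Finally $A_{k+1}=A_kA_k'$ forces these products to span $A_{k+1}$, and orthogonality gives their linear independence, the condition $A_k\cap A_k'=\{0,\unit\}$ ensuring the factorization is non-degenerate so the dimension genuinely multiplies. By the uniqueness clause of Theorem \ref{thm:structFinite} this product family \emph{is} $\cali{B}_{k+1}$. The consequence I need is the nesting relation $p_i^{(k)}=\sum_j p_i^{(k)}q_j^{(k)}$: every atom of $A_k$ splits into a disjoint sum of atoms of $A_{k+1}$, so the $\cali{B}_k$ form a refining tree.

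Next I would extract a single countable orthogonal family from this tree. The subtlety is that an atom of $A_k$ is \emph{not} itself an atom of $A_{k+1}$, so the naive union $\bigcup_k\cali{B}_k$ is not orthogonal across levels. Instead I would proceed by selection: at each splitting $p=\sum_j pq_j^{(k)}$ designate one child as a \emph{representative} to be refined further and emit the remaining children into a set $\cali{B}^\ast$. A short check using $pp'=0$ for distinct atoms and $q_iq_j=0$ for $i\neq j$ shows that any two emitted projections---from the same parent, from sibling parents, or from different levels (one lying under the representative branch of the other)---multiply to zero. Thus $\cali{B}^\ast$ is a countable family of self-adjoint norm-$1$ projections satisfying the first three equations of (\ref{eq:structFinite}), and its partial sums up to level $k$ equal $\unit$ minus the sum of the surviving representatives.

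The step I expect to be the real obstacle is the fourth, completeness, which is exactly why the equation $\sum_i x_i=\unit$ is dropped from the conclusion. The surviving representatives are nonzero norm-$1$ projections at every level, so the partial sums of $\cali{B}^\ast$ do \emph{not} converge to $\unit$ in norm; in the Gelfand picture they shrink onto finitely many ``escaping'' limit points of the inverse system, whose indicator functions fail to be continuous and so do not belong to $A$. The correct reading of ``basis'' here is therefore a maximal orthogonal family whose supports are dense, with $\sum_i x_i=\unit$ recovered only as a weak limit against states that do not charge the escaping points---precisely the approximate-identity viewpoint announced in the introduction. Making this weak-completeness statement precise, and checking independence of the choices made in the selection, is where the bulk of the careful work lies; the algebraic refinement lemma, by contrast, is essentially forced by Theorem \ref{thm:structFinite}.
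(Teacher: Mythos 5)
Your refinement lemma is, in essence, the entirety of the paper's own proof: the authors argue by induction, take the atomic bases of $A_n$ and of $A_n'$ supplied by Theorem \ref{thm:structFinite}, and assert that the products $x^n_i y^n_j$ form the atomic basis of $A_{n+1}$ --- exactly the verification you spell out (self-adjointness, orthogonal idempotence, nonvanishing from the hypothesis on $A_kA_k'$, completeness from $\sum_i p_i^{(k)}=\unit$ in each factor). Everything after that in your proposal is material the paper does not contain: its proof stops at the inductive step, never identifies which countable family is meant to be ``the basis for $A$'', and never confronts the point you correctly isolate. The union $\bigcup_k \cali{B}_k$ of the level-$k$ atomic bases has dense span but fails orthogonality across levels (a level-$k$ atom has nonzero product with each of its children), while any genuinely orthogonal infinite family of norm-$1$ projections cannot have norm-dense span in an infinite-dimensional unital abelian $\cstar$-algebra: in the Gelfand picture $C(X)$ with $X$ compact, $\unit$ stays at distance $\geq 1$ from the span of indicators of disjoint clopen sets unless finitely many of them already cover $X$. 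So your selection scheme, and your admission that only a weak or approximate-identity form of completeness survives, is not over-engineering; it is the correct diagnosis of a gap that the paper's one-line ``routine matter'' leaves open. The one place I would sharpen your framing: you present the density failure as costing only the fourth equation $\sum_i x_i=\unit$, but it equally undermines the claim that your orthogonal family is a \emph{basis for $A$} in any norm-topological sense, so the corollary as stated forces a choice --- either the weaker, weak-limit reading of ``basis'' that you describe, or the non-orthogonal union $\bigcup_k\cali{B}_k$, which sacrifices the second equation instead. Your account is the more careful of the two.
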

\begin{proof}
We prove by induction. The case of $A_0$ is proved in the theorem. Assume we have an atomic basis $\{y^n_1, \dotsc, y^n_{k_n}\}$ for $A_n$. There is a (unique) atomic basis $\{x^n_{1}\,\dotsc, x^n_{m_n}\}$ in $A_n'$. It is now a routine matter to show that $\{x^n_iy^n_j: 1\leq k_n \text{ and } 1\leq m_n\}$ form a basis in $A_{n+1}$. 
\end{proof}
The conditions in the corollary can be slightly weakened by requiring that there be embeddings (injective algebra homomorphisms) $\alpha_k:A_k\rightarrow A_{k+1}$ and  $\alpha'_k:A'_k\rightarrow A_{k+1}$ such that the images $\alpha_k(A_k)$ and $\alpha'_k(A'_k)$ satisfy the conditions stated. Such a structure will appear in the {\em tensor product} of algebras to be defined below. They play an important role in our modeling of information and communication systems. Let us also note that the basis structure in Theorem \ref{thm:structFinite} may be used to defined a finite dimensional $\cstar$ algebra abstractly.
\subsection{Tensor products}
We next describe an important construction for $\cstar$ algebras. Given two $\cstar$ algebras $A$ and $B$, the tensor product $A\tensor B$ is defined as follows. As a set it consists of all finite linear combinations of symbols of the form $\{x\tensor y:x\in A,y\in B\}$ subject to the conditions that for all \(x,u\in A, \; y,z\in B \text{ and } c\in\comp\), 
\beq
\begin{split}
(cx)\tensor y &= x\tensor (cy)= c(x\tensor y)  \\ 
(x+u)\tensor y &= x\tensor u+u\tensor y\text{ and } x\tensor (y+z)=x\tensor y+x\tensor z.
\end{split}
\eeq
Thus the tensor product is {\em bilinear}. There are no other relations. Note that by definition the products of the form $x\tensor y$ span $A\tensor B$. Hence, if $\{x_i\} \text{ and } \{y_j\}$ are bases for $A$ and $B$  respectively then $\{x_i\tensor y_j\}$ is a basis for $A\tensor B$. The linear space $A\tensor B$ becomes an algebra by defining 
$(x\tensor y)(u\tensor z)=xu\tensor yz$ and extending by bilinearity. Explicitly, 
\[\sum_i a_i(x_i\tensor y_i)\sum_j b_j (u_j\tensor z_j)=\sum_{ij}a_ib_j(x_iu_j\tensor y_iz_j)\]
The $*$ is defined by $(x\tensor y)^*=x^*\tensor y^*$ and extending {\em anti-linearly}. The problem is defining the norm since it is not a linear function. In fact, for general $\cstar$ 
algebras there could be a number of inequivalent norms on different completions of $A\tensor B$. This problem of non-uniqueness, however, does not exist if one of the factors is abelian or finite-dimensional. Since, in this work we will be primarily concerned with abelian algebras this point will not be discussed further. Our basic model will be an {\em infinite} tensor product of finite dimensional $\cstar$ algebras which we present next. 

\newcommand{\inftens}[1]{\bigotimes^{\infty}{#1}}
Let $A_k,\;k=1,2,\dotsc,$ be finite dimensional abelian $\cstar$ algebras with atomic basis $B_k=\{x_{k1},\dotsc,x_{kn_k}\}$. Let $B^{\infty}$ be the set consisting of all infinite strings of the form \(z_{i_1}\tensor z_{i_2}\tensor \cdots\) where all but a finite number ($>0$) of $z_{i_k}$s are equal to $\unit$ and if some $z_{i_k}\neq \unit$ then $z_{i_k}\in B_k$. Explicitly, $B^{\infty}$ consists of strings of the form $z_{i_1}\tensor z_{i_2}\tensor \cdots \tensor z_{i_k}\tensor\unit\tensor\unit\tensor\cdots,\;k=1,2,\dotsc $ and $z_i\in B$. Let $\tilde{\mathfrak{A}}=\tensor_{i=1}^{\infty}{A}_i $ be the vector space with basis $B^{\infty}$ such that $z_{i_1}\tensor z_{i_2}\tensor \cdots \tensor z_{i_k}\tensor\cdots $ is linear in each factor separately:  
\[
\begin{split}
z_{1_1}\tensor\cdots \tensor (az_{i_k}+bz'_{i_k})\tensor z_{i_{k+1}}\tensor \cdots=& a(z_{1_1}\tensor \cdots \tensor z_{i_k}\tensor z_{i_{k+1}}\tensor \cdots)+ \\
&b(z_{1_1}\tensor \cdots \tensor z'_{i_k}\tensor z_{i_{k+1}}\tensor \cdots).
\end{split}
\]

Clearly every $\alpha \in \tilde{\mathfrak{A}} $ is a finite linear combination of elements in $B^{\infty}$. We define a product in $\tilde{\mathfrak{A}}$ as follows. First, for elements of $B^{\infty}$: 
\[(z_{i_1}\tensor z_{i_2}\tensor\cdots )(z'_{i_1}\tensor z'_{i_2}\tensor\cdots )=(z_{i_1}z'_{i_1}\tensor z_{i_2}z'_{i_2}\tensor\cdots )\]
We extend the product to whole of $\tilde{\mathfrak{A}}$ by linearity. Next define a norm  by 
\[\norm{\sum_{i_1,i_2,\dotsc}a_{i_1i_2\cdots}z_{i_1}\tensor z_{i_2}\tensor \cdots }=\sup\{|a_{i_1i_2\cdots}|\}\] 
It is straightforward to show that $B^{\infty}$ is an atomic basis. It follows that the above function is indeed an algebra norm and that $\tilde{\mathfrak{A}}$ is an abelian normed algebra. We also define $*$-operation by 
\[\left(\sum_{i_1,i_2,\dotsc}a_{i_1i_2\cdots}z_{i_1}\tensor z_{i_2}\tensor \cdots \right)^*=\sum_{i_1,i_2,\dotsc}\conj{a_{i_1i_2\cdots}}z_{i_1}\tensor z_{i_2}\tensor \cdots \]
It is routine to check that for $x\in \tilde{\mathfrak{A}}$, $\norm{xx^*}=\norm{x}^2$. Finally, we complete the norm and call the resulting $\cstar$ algebra $\mathfrak{A}$. The completion of a norm is a technical device that uses the fact that any normed algebra $X$ can be isometrically mapped to a norm complete algebra (a Banach algebra) $\hat{X}$ and the image $X$ is dense in $\hat{X}$ (see \cite{KR1}).\footnote{There are some delicate convergence issues here. Since $\tilde{\mathfrak{A}}$ consisting of finite sums of tensor products is dense in $\mathfrak{A}$ it often suffices to prove some statement about $\tilde{\mathfrak{A}}$ and extended it to $\mathfrak{A}$ by continuity.} With these definitions  $\mathfrak{A}$ is a $\cstar$ algebra. An important special case is when all the factor algebras $A_i=A$. We then write the infinite tensor product $\cstar$ algebra as $\inftens{A}$. Intuitively, the elements of an atomic basis $B^{\infty}$ of $\inftens{A}$ correspond to strings from an alphabet (represented by the basis $B$) with a given prefix. A general element of $A$ which is a linear combination of elements of $\inftens{B}$. Of particular interest is the 2-dimensional algebra $D$ corresponding to a binary alphabet. Thus we name $\inftens{D}$ the {\em binary} algebra. Let us fix some notation. For any finite dimensional $\cstar$ algebra $A$ the atomic basis $B^{\infty}$ for $\inftens{A}$ constructed above will be denoted by $B^{\infty}_A$ to emphasize the association. The algebras $\inftens{A}$ will be our model of signals from a source/encoder which are strings (of arbitrary length) from some alphabet. We next prove a result that is relevant for coding theory. 
\begin{propn} \label{propn:binary}
Let $A$ be an abelian $\cstar$ algebra of dimension $n$ with atomic basis $B_A=\{x_0,\dotsc,x_{n-1}\}$. Let $B_G=\{y_0,y_1\}$ be the atomic basis of the 2-dimensional algebra $G$ defined above. Then there are injective algebra homomorphisms 
\[ 
\cali{J}: \inftens{G}\rightarrow \inftens{A} \text{ and } \cali{J}': \inftens{A} \rightarrow \inftens{G}
\]
that are isometries. 
\end{propn}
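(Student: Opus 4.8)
The plan is to construct the two maps explicitly on atomic bases and extend them by continuity to the completions. Because the norm on these algebras is, by Theorem \ref{thm:structFinite} and its tensor-product extension, just the supremum of the coefficients in an expansion over an orthogonal atomic system, isometry will be read off directly once each map is shown to carry such an expansion to another one with the same coefficients; thus the real content is the homomorphism property together with injectivity. The two directions are genuinely asymmetric (assume $n\geq 2$, since otherwise no injective map into an infinite-dimensional algebra exists), so I treat them separately.

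For $\cali{J}\colon\inftens{G}\rightarrow\inftens{A}$ I would work factor by factor. The elements $x_0$ and $x_1+\cdots+x_{n-1}$ are two nonzero orthogonal projections of $A$ with $x_0+(x_1+\cdots+x_{n-1})=\unit$, so $y_0\mapsto x_0$, $y_1\mapsto x_1+\cdots+x_{n-1}$ extends to a unital injective $*$-homomorphism $\phi\colon G\rightarrow A$ on a single factor. Defining $\cali{J}$ on the dense subalgebra of finite tensors by $z_{i_1}\tensor\cdots\tensor z_{i_k}\tensor\unit\tensor\cdots\mapsto \phi(z_{i_1})\tensor\cdots\tensor\phi(z_{i_k})\tensor\unit\tensor\cdots$, one checks it preserves products, sums, $\unit$ and $*$, and that it sends the atomic basis $B^{\infty}_G$ to orthogonal atomic elements with unchanged coefficients; continuity then extends it to all of $\inftens{G}$.

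The substantive direction is $\cali{J}'\colon\inftens{A}\rightarrow\inftens{G}$, where the obstruction is that a single $n$-dimensional factor cannot embed unitally into a single $2$-dimensional factor, so each letter of the $n$-ary alphabet must be coded by a \emph{block} of binary letters, and the blocks must be chosen so the coding respects the refinement relation $[s]=\sum_{a}[sa]$ (a length-$k$ atomic element is the sum of its length-$(k+1)$ extensions) that holds in $\inftens{A}$. A naive fixed-length block code fails exactly when $n$ is not a power of $2$, because then the binary cylinder over a codeword splits into $2^m>n$ sub-cylinders rather than $n$. The fix is to fix a \emph{complete prefix-free} binary code $\{u_0,\dots,u_{n-1}\}$ of size $n$ (for instance the words $0,10,110,\dots,1^{n-2}0,1^{n-1}$, whose Kraft sum is exactly $1$): completeness makes the corresponding binary cylinders partition the binary strings at every level, and prefix-freeness makes concatenation injective. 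Encoding $a_1\cdots a_k$ by $u_{a_1}u_{a_2}\cdots u_{a_k}$ and sending the associated atomic element of $\inftens{A}$ to the atomic element of $\inftens{G}$ cut out by that binary string respects the refinement relations and sends $\unit\mapsto\unit$; on the dense subalgebra one verifies $x_ix_j=\delta_{ij}x_i\mapsto[u_s][u_t]=\delta_{st}[u_s]$ (orthogonality of cylinders over incomparable codewords), and extends by continuity as before.

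I expect the delicate point to be the well-definedness of $\cali{J}'$ across the nested family $A^{\otimes k}\subset A^{\otimes(k+1)}$, rather than the purely algebraic check at a single level: one must confirm that the image of a prefix equals the sum of the images of its one-letter extensions, which is precisely the Kraft-equality (completeness) of the chosen code. Once that compatibility is in place, injectivity is immediate from unique decodability, and isometry follows directly, since at each level the images $[u_s]$ of the length-$k$ atomic elements are orthogonal projections summing to $\unit$, so the map carries an orthogonal atomic expansion to one with the same coefficients and hence preserves the supremum norm. I would close by remarking, as motivation only, that under the Gelfand picture these maps are just continuous surjections between the homeomorphic Cantor spectra $\{0,1\}^{\mathbb N}$ and $\{0,\dots,n-1\}^{\mathbb N}$, but the intrinsic coding construction above is what I would actually write down.
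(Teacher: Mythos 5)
Your proposal is correct, and its overall architecture (define the maps on the atomic bases, extend linearly, and read off injectivity and isometry from the fact that the sup-of-coefficients norm is preserved) is exactly the paper's. But both of your explicit constructions differ from the paper's in ways worth noting. For $\cali{J}$ the paper simply sets $\phi(y_0)=x_0$, $\phi(y_1)=x_1$ and extends factor-wise; your choice $y_1\mapsto x_1+\cdots+x_{n-1}$ has the advantage of being unital on each factor, but the extra care is not needed for the statement as claimed. The real divergence is in $\cali{J}'$: the paper uses a \emph{fixed-length} block code, sending $x_r$ to the length-$(\lfloor\log_2 n\rfloor+1)$ tensor word given by the padded binary expansion of $r$, whereas you insist on a \emph{complete prefix-free variable-length} code so that the images of $x_0,\dotsc,x_{n-1}$ are orthogonal projections summing to $\unit$. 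Your motivation --- compatibility with the refinement relation $x\tensor\unit=\sum_j x\tensor x_j$ across the nested family $\tensor^k A\subset\tensor^{k+1}A$ --- is a genuine issue for the standard inductive-limit tensor product, and there your Kraft-equality condition is exactly what makes the map well defined when $n$ is not a power of $2$. The paper sidesteps this entirely because its $\inftens{A}$ is built as the completion of the \emph{free} vector space on the formal basis $B^\infty_A$ with the sup norm, so the refinement relation is never imposed and the simpler fixed-length code already gives a well-defined injective homomorphism. In short, your construction is more robust (it survives the standard identification of $\tensor^kA$ inside $\tensor^{k+1}A$) at the price of extra coding combinatorics; the paper's is shorter but leans on its particular, somewhat nonstandard, definition of the infinite tensor product. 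Both establish the proposition.
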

\begin{proof}
We observe that it is sufficient to define an injective {\em set} map $j$ (resp.\ $j'$) from $B_G^{\infty}$ to $B_A^{\infty}$ (resp. $B_A^{\infty}$ to $B_G^{\infty}$). For we can first extend these to linear maps $\cali{J}$ (resp.\ $\cali{J}'$) on the appropriate spaces. The fact that the bases are atomic will ensure that these are injective algebra homomorphisms, in fact, isometries. Let 
\[
\begin{split}
&j(z_1\tensor \cdots \tensor z_k\tensor\unit\tensor\cdots) =\phi(z_1)\tensor \cdots \tensor \phi(z_k)\tensor\unit\tensor\cdots \text{ where }\\
& z_i\in \{y_0,y_1\} \text{ and } \phi(y_0)=x_0,\; \phi(y_1)=x_1 
\end{split}
\]
To construct $j'$ let the binary representation of the integer $n-1$ be of length $k+1$ where $k=\lfloor\log_2{n}\rfloor$. For $0\leq r \leq n-1$ $r=b^r_0+b^r_12+b^r_22^2+\cdots+b^r_k2^k$ be the binary representation of $r$ of length $k+1$ (pad it with 0's if necessary). Let $\psi:B_A\rightarrow B^{\infty}_G$ be the map defined by 
\[ \psi(x_r)= y_{b^r_0}\tensor y_{b^r_1}\tensor \cdots \tensor y_{b^r_k}\]
extend it to a map $j':B^{\infty}_A\rightarrow B^{\infty}_G$ by 
\[ j'(z_1\tensor \cdots \tensor z_k\tensor \unit\tensor\cdots )=\phi(z_1)\tensor \cdots \tensor \phi(z_k)\tensor \unit\tensor\cdots\]
The map $j'$ is injective and the proof is complete. 
\end{proof}
Let us note that from the injective maps $j\text{ and }j'$ we can construct a {\em bijective} correspondence between $B^{\infty}_A \text{ and }B^{\infty}_G$ by a Schroeder-Bernstein type construction (see \cite{Kleene}) and this can be lifted to an algebra isometry. But for us, the isomorphisms induced by maps like $j$ and $j'$ (these are certainly not unique) will be greatest interest. Essentially, what the proposition says is that it is often sufficient to restrict our attention to the special algebra $\inftens{G}$. 

The next step is to describe the state space. We recall that states of an algebra $A$ are precisely the positive functionals $\omega$ that are normalized: $\omega(\unit)=1$. Given a $\cstar$ subalgebra $V\subset A$ the set of states of $V$ will be denoted by $\mathscr{S}(V)$. Let $\mathfrak{A}=\tensor^{\infty}_{i=1} A_i$ denote the infinite tensor product of finite-dimensional algebras $A_i$. An infinite product state of $\mathfrak{A}$ is a functional of the form
\[ \Omega=\omega_1\tensor \omega_2\tensor\cdots \text{ such that }\omega_i\in \mathscr{S}(A_i)\]
This is indeed a state of $\mathfrak{A}$ for if $\alpha_k = z_1\tensor z_2 \tensor \cdots \tensor z_k\tensor\unit\tensor\unit\cdots\in \mathfrak{A}$ then 
\[\Omega(\alpha)=\omega_1(z_1)\omega_2(z_2)\cdots \omega_k(z_k), \]
a {\em finite} product. Since an arbitrary element of $\mathfrak{A}$ is the limit of sequence of finite sums of elements of the form $\alpha_k,\; k=1,2,\dotsc $ $\Omega$ is bounded by the principle of uniform boundedness. Clearly, it is positive. A general state on $\mathfrak{A}$ is a convex combination of product states like $\Omega$. 
\subsection{Analytic functions on $\cstar$ algebras}
In this section we discuss another useful construction. Let $A$ be a $\cstar$ algebra. Suppose $f(z)$ is an analytic function whose Taylor series $\sum_{n=0}^{\infty}a_n (z-c)^n$ is convergent in a region $|z-c|<R$. The convergence of the series $\sum \norm{x-c\unit}^n$ for $\norm{x-c\unit}<R$ implies that the series $\sum _{n=0}^{\infty}(x-c\unit)^n$ converges (we need completeness of $A$ for this). Thus it makes sense to talk of analytic functions on a $\cstar$ algebra. If we have an atomic basis $\{x_1,x_2,\dotsc \}$ in an abelian $\cstar$ algebra then the functions are particularly simple in this basis. Thus if $x=\sum_i a_ix_i$ then $f(x)=\sum_i f(a_i)x_i$ provided that $f(a_i)$ are defined in an appropriate domain. We will mostly take this as our definition with the understanding that the constant function $c$ is identified with $c\unit$. 
\section{Algebraic approach to probability}\label{sec:prob}
We have observed that discrete signals from a source are modeled by an abelian algebra. The elements of the algebra correspond to random variables representing the output of the source. With random variables we always associate a probability distribution. In the standard treatment of probability theory the probability or sample space is introduced first. Random variables are defined as (measurable) real (or complex, in general) functions on this space. One then finds the probability distributions of the random variables and most important quantities like mean, variance and correlations are based on these distributions. In particular, the mean or expectation value plays a central role. Note that random variables can be added and multiplied making it a real algebra (scalars are the constant random variables). Note also that random variables also represent quantities that are actually measured or observed- the voltage across a resistor, the currents in an antenna, the position of a Brownian particle and so on. The probability distribution corresponds to the {\em state} of the devices that produce these outputs. We will take the alternative view and start with these observables as our basic objects. In this way, we single out the objects which are relevant to a specific problem. In the following paragraphs we formalize these notions. 

\subsection{Basic notions}\label{sec:prob1}
\newcommand{\scrp}[1]{{\mathscr #1}}
\newcommand{\intd}{\mathrm{d}}
A classical observable algebra is an abelian complex $\cstar$ algebra $A$. It is convenient to use complex algebras. We can restrict our attention to real algebras whenever necessary. Recall that a state on $A$ is positive linear functional $\omega$ such that $\omega(\unit)=1$. We can identify $\omega$ with a {\em probability measure} as follows. Suppose $(M,\cali{S},P)$ is probability space, ($M$= sample space, $\cali{S}$ = $\sigma$-algebra, $P$=probability measure). Let $L_{\infty}(M,\cali{S},P)$ (or simply $L_{\infty}(M)$ if the measure structure is clear) be the set of essentially bounded measurable complex functions.\footnote{A function $f$ is said to be essentially bounded if there is a constant $K$ such that $|f(x)|\leq K$ almost everywhere. The essential is the infimum over all such $K$: $\mathrm{ess}\;\sup(|f|)= \inf \{k: P\{x:|f(x)|>k\}=0\}$.} We can give it a $\cstar$ structure as in the case of $C(X)$, the space of continuous functions on a compact topological space $X$ (see equation (\ref{eq:exampBasic1})), but using the essential supremum instead of the ordinary supremum. If $B\in \cali{S}$ then the {\em indicator} function $I_B\in L^{\infty}(M,\comp)$ and 
\[\int_M I_B \intd P = P(B)\]
where the integral is defined in the sense of Lebesgue. Note that $\omega_P(f)\equiv  \int fdP$ is a positive linear functional on $L_{\infty}(M)$. Since $\omega_P(\unit)=1$ it is a state. 

\noindent
\begin{Def}
A {\em probability algebra} is a pair $(A, S)$ where $A$ is an observable algebra and $S\subset \scrp{S}(A)$ is a set of states. A probability algebra is defined to be {\em fixed} if $S$ contains only one state. 
A probability algebra $\scrp{A}_1=(A_1,S_1)$ is defined to be a {\em cover} of another $\scrp{A}_2=(A_2,S_2)$ if there is an algebra homomorphism $\phi:A_1\rightarrow A_2$ and a one-to-one correspondence $\gamma:S_1\leftrightarrow S_2$ such that the following conditions hold: i.\ $\phi$ is onto and ii.\ for all $x\in A_1$ and $\omega\in S_1$: $\omega(x)=\gamma(\omega)(\phi(x))$. 
\end{Def}
\commentout{
A probability algebra $\scrp{A}_1=(A_1,S_1)$ is defined to be a {\em cover} of another $\scrp{A}_2=(A_2,S_2)$ if there is an algebra homomorphism $\phi:A_1\rightarrow A_2$ and a one-to-one correspondence $\gamma:S_1\leftrightarrow S_2$ such that the following conditions hold. 
\be
\item
$\phi$ is onto. 
\item
For all $x\in A_1$ and $\omega\in S_1$: $\omega(x)=\gamma(\omega)(\phi(x))$. 
\ee
We also have the more general notion of an {\em approximate} cover. With the notation as above, $\scrp{A}_1$ is an approximate cover of $\scrp{A}_2$ if the second condition for (exact) cover holds and the first is replaced by
\[\forall (\epsilon >0,\omega\in S_2\text{ and } y\in A_2)\text{ there exists } x\in A_1 \text{ such that }\omega (|\phi(x)-y|)<\epsilon. \]
For an approximate cover the continuous map is required to converge only weakly. This corresponds to ``convergence in the mean'' in probability theory.  
As immediate consequences of these definitions we have $\omega(x)=0$ for all $x\in \text{Ker}(\phi) \text{ and } \omega\in S_1$, where $\text{Ker}(\phi)=\{x\in A: \phi(x)=0\}$. Thus, although the map $\phi$ is not required to be one-to-one we can lift it to an isomorphism $\tilde{A}=A/\text{Ker}(\phi)\rightarrow A'$ such that corresponding to each $\omega\in S$ there is a unique state $\tilde{\omega}$ of $\tilde{A}$. We give some examples of (approximate) covers. 
\be
\item[]{\bf Example 1}. Let $(M_1,\cali{S}_1,P_1)$ and $(M_2,\cali{S}_2,P_2)$ be two probability spaces. Let $A_i$ be the algebra of bounded random variables on $M_i,\;i\in\{1,2\}$. Let $S_i$ consist of a single element $\omega_i$ defined by $\omega_i(X_i)= \int X_i\intd P_i$ with $X_i\in A_i$. Suppose we have maps $\phi$ and $\gamma(\omega_1)=\omega_2$ satisfying the conditions above. Then if $X\in \text{Ker}(\phi)$ then $X^*X\in \text{Ker}(\phi)$. This implies that $\int X^*X\intd P_1 =0$. Since $X^*X\geq 0$, $X$ must vanish almost everywhere (a.e.). Conversely if  $X$ vanishes a.e.\ then so does $\phi(X)$. So the kernel consists of ``negligible'' functions. 
\item[]{\bf Example 2}. Let $A_1=\inftens{G}$ and $S_1=\{\Gamma\}$ where $\Gamma=\omega\tensor\omega\tensor\cdots $ and $\omega$ is state of $G$ such that $\omega(y_0)=1/2=\omega(y_1)$. Let  $\{y_0,y_1\}$ is the atomic basis of $G$. Let $A_2=L_{\infty}([0,1], \cali{R},\mu)$. Let 
\[
\begin{split}
D_n& =\{\frac{d_1}{2}+\frac{d_2}{2^2}+\cdots+\frac{d_n}{2^n}: d_i=0\text{ or }1\}\\
F& =\{I_{[a,a+1/2^{n}]}:a\in D_n,\; n=1,2,\dotsc\}
\end{split}
\]
Thus $D_n$ is the set of dyadic rational of length $n$ and $F$ is the set of indicator functions of dyadic intervals. Let $A'_2$ be the $\cstar$ subalgebra of $A_2$ generated by $F$. Let $\mu$ be the standard Lebesgue measure on $[0,1]$ and $\psi$ be the state of $A_2$ defined by $\psi(f)=\int f \intd\mu$. If $\zeta=z_1\tensor z_2\tensor \cdots\tensor z_n\tensor \unit\tensor\unit\tensor\cdots$ is an atomic basis element then set 
\[ 
\phi(\zeta)= I_{[a,a+1/2^{n}]} \text{ where } a=\sum_{i=1}^n\frac{d(z_i)}{2^i} \text{ and } d(y_0)=0,d(y_1)=1
\]
It is an algebra homomorphism since two dyadic intervals $J_1,J_2$ of same length are either disjoint ($I_{J_1}I_{J_2}=0$) or identical. If we define the map $\gamma(\Gamma)=\psi$ then 
$\Gamma(\zeta)=\psi(\phi(\zeta))=1/2^n$. Since any function can be approximated by sums of indicator functions of dyadic intervals within any $\epsilon>0$ t$\scrp{A}_1$ is an approximate cover of $\scrp{A}_2$. 
\ee
}
Let $\omega$ be a state on an abelian $\cstar$ algebra $A$. Call two elements $x,y\in A$ uncorrelated in the state $\omega$ if $\omega(xy)=\omega(x)\omega(y)$. Note that this definition depends crucially on the state: the same two elements can be correlated in some other state $\omega'$. Two natural questions are immediate. Are there any states for which {\em every} pair of elements of $A$ are uncorrelated? Are there a pair of elements which are uncorrelated in {\em every} state? Two trivial candidates for the second question are $\unit$ and 0. Either of them is uncorrelated to every element. We implicitly exclude these two trivial cases. Concerning the second question the answer is negative in general. On the first question, a state $\omega$ is called multiplicative if $\omega(xy)=\omega(x)\omega(y)$ for all $x,y\in A$. Note that the notion of positivity defines a partial order on the space of functionals making it an ordered vector space \cite{KR1}. The set of states, $\mathscr{S}$, is convex in the usual sense that for numbers $p_i\geq 0,\; \sum_{i=1}^kp_i=1$ and states $\omega_i,\;i=1,\dotsc,k$ the functional $\sum_i p_i\omega_i$ is also a state. The extreme points of $\mathscr{S}$ are called {\em pure} states. In the case of abelian $\cstar$ algebras a state is pure if and only of it is multiplicative \cite{KR1}. Thus in a pure state any two observables are uncorrelated. This is not generally true in the non-abelian quantum case. 

Next we come to the important notion of {\em independence}. First, given $S\subset A$ let $A(S)$ denote the subalgebra generated by $S$ (the smallest subalgebra of $A$ containing $S$). Two subsets $S_1,S_2\subset A$ are defined to be {\em independent} if all the pairs $\{(x_1,x_2): x_1\in A(S_1), x_2\in A(S_2)\}$ are uncorrelated. As independence and correlation depend on the state we sometimes write $\omega$-independent/uncorrelated when to emphasize this. Clearly, independence is much stronger condition than being uncorrelated. It is easy to construct examples in 3 or more dimensions where a pair of observables $x,y$ are uncorrelated but they are not independent: for example, $x^2 \text{ and } y$ maybe correlated. However, in 2 dimensions $x\text{ and }$ are uncorrelated if and only if one of them is 0 or $c\unit$. Let us note that as in the quantum case two dimensions is an exceptional case. The next theorem shows the structural implications of independence. 
\begin{thm} \label{thm:structIndep}
Two sets of observables $S_1,S_2$ in a finite dimensional abelian $\cstar$ algebra $A$ are independent in a state $\omega$ if and only if for the (unital) subalgebras $A(S_1)$ and $A(S_2)$ generated by $S_1$ and $S_2$ respectively there exist states $\omega_1 \in \scrp{S}(A(S_1)),\; \omega_2\in \scrp{S}(A(S_2))$ such that $(A(S_1)\tensor A(S_2),\{\omega_1\tensor\omega_2\})$ is a cover of $(A(S_1S_2),\omega')$ where $A(S_1S_2)$ is the subalgebra generated by $\{S_1,S_2\}$ and $\omega'$ is the restriction of $\omega$ to $A(S_1S_2)$. 
\end{thm}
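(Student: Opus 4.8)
The plan is to exhibit the cover explicitly through the canonical multiplication map and to read off both implications from its defining property. Write $A_1 = A(S_1)$, $A_2 = A(S_2)$ and $A_{12} = A(S_1 S_2)$; all three are finite-dimensional abelian $\cstar$ algebras, so Theorem \ref{thm:structFinite} supplies a unique atomic basis for each. Define $\phi : A_1 \tensor A_2 \to A$ by $\phi(x_1 \tensor x_2) = x_1 x_2$ and bilinear extension. First I would check that $\phi$ is a $*$-homomorphism: bilinearity makes it well defined, and because $A$ is abelian one has $\phi\big((x_1\tensor x_2)(y_1 \tensor y_2)\big) = x_1 y_1 x_2 y_2 = (x_1 x_2)(y_1 y_2) = \phi(x_1 \tensor x_2)\,\phi(y_1 \tensor y_2)$, with the $*$-structure treated the same way. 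The image of $\phi$ is the linear span of the products $x_1 x_2$; since every monomial in $S_1 \cup S_2$ can be reordered by commutativity into a product of a factor from $A_1$ and a factor from $A_2$, that span is exactly $A_{12}$. Hence $\phi : A_1 \tensor A_2 \to A_{12}$ is an onto homomorphism, the candidate map in the cover.

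For the forward implication, assume $S_1, S_2$ are $\omega$-independent and take $\omega_1 = \omega|_{A_1}$ and $\omega_2 = \omega|_{A_2}$, which are states. I would verify the cover identity on the spanning elements $x_1 \tensor x_2$: the product state gives $(\omega_1 \tensor \omega_2)(x_1 \tensor x_2) = \omega(x_1)\omega(x_2)$, while the cover condition requires $\omega'(\phi(x_1 \tensor x_2)) = \omega'(x_1 x_2) = \omega(x_1 x_2)$. Independence says precisely that $\omega(x_1 x_2) = \omega(x_1)\omega(x_2)$, so the two sides agree on a spanning set and hence everywhere by linearity. With $\gamma(\omega_1 \tensor \omega_2) = \omega'$ this exhibits $(A_1 \tensor A_2, \{\omega_1 \tensor \omega_2\})$ as a cover of $(A_{12}, \omega')$.

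For the converse I would run the same identity backwards. Feeding $x_1 \tensor \unit$ into the cover condition gives $\omega_1(x_1) = \omega'(x_1) = \omega(x_1)$ for every $x_1 \in A_1$, so the factor state $\omega_1$ is forced to be $\omega|_{A_1}$, and likewise $\omega_2 = \omega|_{A_2}$. Evaluating then on $x_1 \tensor x_2$ yields $\omega(x_1 x_2) = \omega'(x_1 x_2) = (\omega_1 \tensor \omega_2)(x_1 \tensor x_2) = \omega(x_1)\omega(x_2)$ for all $x_1 \in A_1,\ x_2 \in A_2$, which is exactly $\omega$-independence of $S_1$ and $S_2$.

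The main obstacle is not the computation but pinning down the homomorphism. The cover definition only demands \emph{some} onto homomorphism, whereas the argument above --- and, I expect, the theorem's intent --- relies on $\phi$ being the canonical multiplication map; with a different onto homomorphism the converse can fail, so this choice must be made explicit. The two delicate points are therefore (i) verifying that the canonical map is genuinely onto $A_{12}$, which is where commutativity and the atomic-basis structure of Theorem \ref{thm:structFinite} enter, since some products $e_i f_j$ of atomic projections may vanish and $\phi$ acquires a nontrivial kernel exactly then; and (ii) ensuring in the converse that the cover in play is this canonical one, so that $x_1 \tensor \unit$ and $\unit \tensor x_2$ really map to $x_1$ and $x_2$. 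I would resolve both by stating the cover via the multiplication map at the outset and carrying that map through both directions.
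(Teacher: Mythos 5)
Your proof is correct and follows essentially the same route as the paper's: the cover is exhibited via the canonical multiplication map $\phi(u\tensor v)=uv$ with the factor states taken to be the restrictions of $\omega$, and the converse is read off from the cover identity on simple tensors $x_1\tensor\unit$, $\unit\tensor x_2$ and $x_1\tensor x_2$. The only substantive difference is that you bypass the paper's Lagrange-interpolation construction of atomic bases for $A(S_1)$ and $A(S_2)$, which is legitimate since the definition of independence already quantifies over all pairs from the generated subalgebras; your closing remark that the converse requires fixing $\phi$ to be the multiplication map (the bare cover definition allows an arbitrary onto homomorphism) is a fair and worthwhile observation about the theorem's intent.
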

\begin{proof}
First assume that $S_1=\{x\}$ and $S_2=\{y\}$. 
Let $\{x_1,\dotsc,x_n\}$ be an atomic basis of $A$. Let $x=\sum_ia_ix_i$ and $y=\sum_ib_ix_i$. Some of these coefficients may be 0 and some may be equal. Write 
\[ x=a_1P_1+a_2P_2+\cdots +a_kP_k \text{ and } y=b_1Q_1+b_2Q_2+\cdots+b_lQ_l\] 
Here the $a_i$'s are distinct the $P_i=x_{i_1}+x_{i_2}+\cdots+x_{i_r}$ corresponding to all basis elements whose coefficients are equal to $a_i$. Similarly for $Q_j$'s. 
Note that $P_iP_m=\delta_{im}$ and $Q_jQ_s=\delta_{js}$. By Lagrange interpolation there are  polynomials $f_i(\lambda),\;i=1,\dotsc,k$ and $g_j,\;j=1,\dotsc,l$ such that $f_i(a_r)=\delta_{ir}$ and $g_j(b_s)=\delta_{js}$. Since $x,y$ are $\omega$-independent 
\beq\label{eq:indStruct}
\omega(f_i(x)g_j(y))= \omega(P_iQ_j)=\omega(P_i)\omega(Q_j). 
\eeq
The subalgebra $A(S_1)$($A(S_2)$) is generated by the $P_i$'s($Q_j$'s). Clearly $\{P_i:i=1,\dotsc,k\}$ and $\{Q_j:j=1,\dotsc,l\}$ are atomic bases for $A(S_1)$ and $A(S_2)$ respectively. Define states $\omega_1$ and $\omega_2$ of $A(S_1)$ and $A(S_2)$ resp.\ by restricting $\omega$ to these subalgebras. Let $\phi:X\tensor Y\rightarrow A'$ be the natural map $\phi(u\tensor v)=uv$. Using equation \ref{eq:indStruct} it is a routine check that $(A(S_1)\tensor A(S_2),\{\omega_1\tensor\omega_2\})$ is a cover of $(A(S_1,S_2),\omega')$. 

Now for the general case. Since $A(S_1)$ and $A(S_2)$ are subalgebras of $A$ they have atomic bases $\{u_i\}$ and $\{v_j\}$ respectively. As in the previous case we have polynomials $\{p_i\}\text{ and }\{q_j\}$  in several variables such that $p_i(x_1,\dotsc, x_{k_i})=u_i$ and $q_j(y_1,\dotsc,y_{m_j})=v_j$ where \(x_i\in S_1\text{ and }q_i\in S_2\). We do not have easy interpolating polynomial in this case. By repeating the argument of the singleton case above we get the appropriate cover and complete the proof. 

The converse is clear from the definition of a cover and the fact that in a {\em product state} $\omega_1\tensor\omega_2(z_1\tensor z_2)=\omega_1(z_1)\omega_2(z_2)$. 
\end{proof}
We can even extend it to infinite tensor product by restricting to finite segments. The next step is to extend the notion of independence to more than two subsets. Let $S_1,\dotsc,S_k\subset A$ and $\omega$ a state of $A$. Then the subsets are defined to be $\omega$-independent if for all $x_i\in A(S_i),\; i=1,\dotsc, k$ we have 
\[ \omega(x_1\cdots x_k)=\omega(x_1)\cdots \omega(x_k)\]
Here $A(S_i)$ is the subalgebra generated by $S_i$. We can then show that for states $\omega_i\in \scrp(A(S_i))$, the restriction of $\omega$ to $A(S_i)$ the pair 
\((A(S_1)\tensor\cdots \tensor A(S_k), \omega_1\tensor\cdots\tensor \omega_k)\) is a cover of $A(S_1\dotsc S_k),\omega'$, where $\omega'$ is the restriction of $\omega$ to $A(S_1\dotsc S_k)$, the algebra generated by $S_1,\dotsc, S_k$. 
We thus see the relation between independence and (tensor) product states in the classical or commutative theory. The non-commutative or quantum case is more delicate and requires careful handling. 
\subsection{Probability distribution functions}
In this section we investigate another important concept of a (cumulative) distribution function ({\bf d.f}) in the algebraic framework. As the paper's primary concern is an alternative formulation of mathematical models of information and communication we do not undertake an extensive exploration of the algebraic approach to probability concepts. However, the notion of a distribution function underpins large part of probability theory and its applications. One of the advantages of using $\cstar$ or more general Banach algebra is that we have both algebraic and analytical methods at our disposal.  

Given a subalgebra $B\subset A$ of an abelian  $\cstar$ algebra let $S_a= \{x\in A: xs=0\;\forall s\in S\}$ be the {\em annihilator} of $S$.  This is an ideal\footnote{An ideal of a algebra $A$ is a subset $I$ of $A$ which is closed under addition and for every $x\in A$, $xI\subset I$. Hence a non-zero proper ideal cannot contain the identity of $A$} and hence there is an {\em approximate identity}. An approximate identity in an ideal $B$ is a {\em net} $\{y_\lambda\}$ with $0 \leq y_{\lambda}\leq \unit$ such that $xy_{\lambda} \rightarrow x $ (also $yx_{\lambda}  \rightarrow x,\;\forall x\in B$  if the algebra is nonabelian). For the details see \cite{KR1}. Obviously $S_a$ cannot contain the identity of the original algebra unless $S=\{0\}$. We only mention that nets \cite{Kelley} are generalization of sequences where the indexing set is not required to be countable. However, in the case of separable algebras (algebras with a dense countable set) the reader may substitute ``sequence'' for ``net''. In the following it will suffice for our purpose to restrict to the separable case although we often use the language of ``nets''. We can now define distribution of a set of observables. 

\begin{Def}
Let $S=\{x_1,x_2,\dotsc, x_n\}$ be a finite self-adjoint subset of $A$ where $(A,\omega)$ is a fixed probability algebra. For ${\tt t}=(t_1,t_2,\dotsc, t_n )\in \real$ let $S_{\tt t}\subset A$ denote the set of elements  $\{(t_i\unit - x_i):i=1,\dots, n\}$ and $S_{\tt t}^-$ the set of elements  $\{z_-:z\in S_{\tt t}\}$, negative parts of members of $S_{\tt t}$. Let $\{e_{\lambda}({\tt t})\}$ be approximations of identity in the annihilator ideal $(S^-_{\tt t})_a$. Then the $\omega$-distribution of $S$ is defined to be the real function 
\[ F_S({\tt t})= \lim_{\lambda} \omega(e_{\lambda}) \]
\end{Def}
The rationale for this definition is simple. For convenience, restrict to a single random variable. Suppose $X$ is a bounded random variable on a probability space $\{\Omega,\cali{S},P\}$. Then the distribution function $f(t)=P(\{\alpha\in \Omega: X_t=tI-X(\alpha)\geq 0\})$. For a fixed $t$ write the random variable $X_t=X_{t+}-X_{t-}$ as the difference of two non-negative random variables. Then the distribution function of $X$ is the probability of the event $E_t$ where $E_t=\{ \alpha \in \Omega: X_t(\alpha) \geq 0\}$. Consider now $X_{t-}$ and $G_t=\{\alpha: X_t(\alpha)<0\}=\Omega-E_t$. Then $X_{t-}$ is $>0$ on $G_t$ and 0 outside it. If $Y$ is any function on $\Omega$ such that $YX_{t-}=0$ then $Y$ must vanish on $G_t$. Conversely any function $Y$ that vanishes on $G_t$ satisfies the equation $YX_{t-}=0$. In particular the indicator function $\scrp{I}_{F_t}$ satisfies it. The function $\scrp{I}_{F_t}$ is the identity on $(X_{t-})_a$ and its expectation value $\int \scrp{I}_{F_t}\intd P=P(F_t)$. 
Although, the indicator functions are not generally continuous we can approximate them by a sequence of continuous functions. This sequence is an approximate identity in the $\cstar$ algebra of continuous functions. In most cases of interest to us the algebras will be separable. Then the nets can be replaced by sequences. Note that since the net $\{e_\lambda\}$ is bounded and increasing the net $\{\omega(e_\lambda)\}$ converge. Finally, let us observe that even though the approximate identity is not unique the distribution function as defined above is unique. To prove this  $\{e_\lambda\}, \{f_\lambda\}$ are two approximate identities. Then using the fact $\omega(e_\lambda f_\mu -e_{\lambda'} f{\mu '})=\omega(f_\mu(e_\lambda -e_{\lambda'})+e_{\lambda'}(f_\mu -f_{\mu'}))$ is Cauchy since $f_\mu(e_\lambda -e_{\lambda'})\rightarrow (e_\lambda -e_{\lambda'})$ and $e_{\lambda'}(f_\mu -f_{\mu'})\rightarrow f_\mu -f_{\mu'}$ we conclude that the double-net $\{\omega(e_{\lambda}f_\mu)\}$ converges to the limit $\lim_{\lambda}\omega(e_{\lambda})=\lim_{\mu}\omega(f_{\mu})$. Extending the definition of the d.f to an arbitrary element $z$ in the algebra is simple. Write $z=x+iy$ where $x$ and $y$ are self-adjoint. Let $F_x(t)\text{ and }F_y(t)$ denote the d.f of $x$ and $y$ respectively. Then the d.f of $z$: $F_z(t)=F_x(t)+iF_y(t)$. 

\begin{thm}\label{thm:dist}
Let $x_1,\dotsc, x_n$ be self-adjoint elements of an abelian $\cstar$ algebra $A$. Let $F(t_1,\dotsc, t_n)$ be their joint distribution function. Then $F(t_1,\dotsc, t_n)$  is non-negative, left-continuous  and non-decreasing in each variable. We also have boundary conditions 
\[ \lim_{t_1, \dotsc, t_n \rightarrow \infty} F(t_1,\dotsc, t_n)=1 \text{ and } \lim_{t_1, \dotsc, t_n \rightarrow -\infty} F(t_1,\dotsc, t_n)=0\]
If the elements are independent and $F(t_i)$ denotes the distribution function of $x_i$ then
\[F(t_1,\dotsc, t_n)=F(t_1)F(t_2)\cdots F(t_n).\]
If a sequence $x_n\rightarrow x$ in then the corresponding d.f's $F_{x_n}(t)\rightarrow F_x(t)$. 
\end{thm}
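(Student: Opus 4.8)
The plan is to read the definition operationally and reduce every claim to one master monotonicity fact. Passing to the Gelfand representation of $A$ as a function algebra (the structure theorem for abelian $\cstar$ algebras), $(t_i\unit-x_i)_-$ is the function $\max(0,x_i-t_i)$, its annihilator ideal consists of the elements supported off $\{x_i>t_i\}$, and $\lim_\lambda\omega(e_\lambda)$ is the $\omega$-mass of the joint support $\{x_1\leq t_1,\dots,x_n\leq t_n\}$, i.e.\ the classical joint distribution function. I would, however, prove the master lemma intrinsically: for self-adjoint $a\leq b$ one has $a_-\geq b_-$ (the negative part is an order-reversing function of the element), hence the annihilators satisfy $(a_-)_a\subseteq(b_-)_a$; and if $I\subseteq J$ are closed ideals with approximate identities $\{e_\lambda\}\subseteq I$ and $\{f_\mu\}\subseteq J$, then, using $0\leq f_\mu e_\lambda\leq f_\mu$ in the abelian algebra together with $f_\mu e_\lambda\to e_\lambda$, one gets $\omega(e_\lambda)=\lim_\mu\omega(f_\mu e_\lambda)\leq\lim_\mu\omega(f_\mu)$, so $\lim_\lambda\omega(e_\lambda)\leq\lim_\mu\omega(f_\mu)$. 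From this, non-negativity is immediate ($0\leq e_\lambda$ and $\omega$ positive); increasing a single $t_i$ shrinks $(t_i\unit-x_i)_-$, enlarges its annihilator, hence enlarges the intersection $(S^-_{\tt t})_a=\bigcap_i((t_i\unit-x_i)_-)_a$, so $F$ is non-decreasing in each variable; and applying the lemma to the element gives the companion fact I need later, $u\leq v\Rightarrow F_u\geq F_v$.

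For the boundary values I would use that each $x_i$ is bounded. Once every $t_i>\norm{x_i}$, each $t_i\unit-x_i$ is strictly positive, all negative parts vanish, $(S^-_{\tt t})_a=A$, the approximate identity converges to $\unit$, and $F=\omega(\unit)=1$. Conversely, as soon as one $t_i<-\norm{x_i}$ the element $(t_i\unit-x_i)_-=x_i-t_i\unit$ is invertible, its annihilator is $\{0\}$, hence so is the intersection, and $F=0$. For one-sided continuity I would show that as $s\nearrow t_i$ the increasing ideals $((s\unit-x_i)_-)_a$ have norm-dense union in $((t_i\unit-x_i)_-)_a$; combined with the monotone lemma and an approximate-identity interchange this gives $F(\dots,s,\dots)\to F(\dots,t_i,\dots)$. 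I should flag that whether this yields left- or right-continuity is dictated entirely by the $\leq$ versus $<$ convention implicit in passing to the norm-closed annihilator at the threshold $\{x_i=t_i\}$; this is the one point where the continuity side must be pinned down with care.

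For the factorization under independence I would invoke the $n$-fold version of Theorem \ref{thm:structIndep}: if the $x_i$ are $\omega$-independent then $(A(\{x_1\})\tensor\cdots\tensor A(\{x_n\}),\omega_1\tensor\cdots\tensor\omega_n)$ covers the generated subalgebra, with $\omega_i$ the restriction of $\omega$. The joint annihilator is the intersection of the single-variable ones, and since the subalgebras commute its approximate identity may be taken to be the product $e_1\cdots e_n$ of the single-variable approximate identities $e_i\in A(\{x_i\})$. Independence gives $\omega(e_1\cdots e_n)=\omega(e_1)\cdots\omega(e_n)$, and passing to the limit over the nets yields $F(t_1,\dots,t_n)=F(t_1)\cdots F(t_n)$.

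The last assertion, continuity of $x\mapsto F_x$ under convergence $x_n\to x$ in norm, is where the real work lies, and I would prove it by sandwiching. Writing $\epsilon_n=\norm{x_n-x}$ we have $x-\epsilon_n\unit\leq x_n\leq x+\epsilon_n\unit$, so the element-monotonicity $u\leq v\Rightarrow F_u\geq F_v$ together with the translation identity $F_{x\pm\epsilon\unit}(t)=F_x(t\mp\epsilon)$ gives $F_x(t-\epsilon_n)\leq F_{x_n}(t)\leq F_x(t+\epsilon_n)$. Letting $\epsilon_n\to 0$ squeezes $F_{x_n}(t)$ between the left and right limits $F_x(t^-)$ and $F_x(t^+)$. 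The anticipated obstacle is precisely that these two limits differ at a jump of $F_x$ (an atom of the distribution), so the bare statement $F_{x_n}(t)\to F_x(t)$ can hold for every $t$ only once it is read as convergence at the continuity points of $F_x$, i.e.\ as convergence in distribution. Everything else reduces to the monotone-ideal bookkeeping of the first paragraph; it is only this atom behaviour, and the companion left/right-continuity convention, that genuinely needs care.
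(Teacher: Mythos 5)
Your proof is correct and rests on the same core mechanism as the paper's --- the order reversal $a\leq b\Rightarrow a_-\geq b_-$, the resulting nesting of annihilator ideals, the comparison of approximate identities, and boundedness of the spectrum for the boundary values --- but you execute three steps genuinely differently, and in each case more completely. First, you isolate the comparison of $\lim_\lambda\omega(e_\lambda)$ over nested ideals as a standalone lemma, proved via $0\leq f_\mu e_\lambda\leq f_\mu$ and $f_\mu e_\lambda\to e_\lambda$; the paper proves only the ideal inclusion and leaves the passage to the limits of the states implicit. Second, for independence the paper appeals to the explicit approximants $e_m=m\chi(1+m\chi)^{-1}$ of equation (\ref{eq:dist}) built from the product $\chi=\prod_i(t_i\unit-x_i)_+$, together with the remark that $x_+$ lies in the closed subalgebra generated by $x$; your product net $e_1\cdots e_n$ of single-variable approximate identities is an approximate identity for the intersection ideal and factorizes under $\omega$ directly, which is cleaner --- just note explicitly that each $e_i$ must be chosen inside the closed subalgebra generated by $x_i$, which the proved uniqueness of the limit permits. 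Third, for the final claim the paper offers only ``one proves the last statement using a sequence like (\ref{eq:dist})''; your sandwich $F_x(t-\epsilon_n)\leq F_{x_n}(t)\leq F_x(t+\epsilon_n)$ via the translation identity is an actual argument. Your two caveats are not defects of your proof but of the theorem as stated: the paper's own approximants $e_m({\tt t}+1/m)$ approach the threshold from the right, which corresponds to the $\leq$ convention and hence to right-continuity rather than the asserted left-continuity; and $F_{x_n}(t)\to F_x(t)$ genuinely fails at atoms of $F_x$ (take $x=0$, $x_n=\unit/n$, $t=0$), so the last clause can only hold at continuity points of $F_x$, i.e.\ as convergence in distribution.
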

\begin{proof}
This is of course a standard result in probability theory. We sketch an algebraic proof in the current setting. The most direct approach is to use the notion of continuous function calculus which essentially asserts that continuous functions on the spectrum  can be lifted to define functions on the algebra. More precisely, given an element $x\in A$ there is an isometric algebra homomorphism between the algebra of continuous functions on the spectrum of $x$, $C(\text{sp}(x))$ and the closed subalgebra  $C(x)$  generated by $x$ \cite{KR1}. Thus for every function $f(u)$ on $\text{sp}(x)$ there is a unique element $f(x)$ in $C(x)$ such that 
if $f(u)\geq 0$ then $f(x)\geq 0$. Since for any real $c$ and $\delta>0$ , \(|t+\delta-u|-(t+\delta-u)\leq |t-u|-(t-u)\) we infer that \(|t+\delta-x|-(t+\delta-x)\leq |t-x|-(t-x)\) for self-adjoint $x\in A$. Now for any $y\in A$ if $xy=0$ then $|x|y=0$ and hence $x_+y=x_-y=0$. So if $x\leq z$ and $ v\in A$ then $zv=0$ implies $xv=0$. Thus the annihilator ideal of $|t+\delta-x|-(t+\delta-x)$ contains the annihilator ideal of $|t-x|-(t-x)$. The continuity follows from the following construction which is useful for calculating distributions. Write $x(t)= t\unit -x$, ${\tt t}=(t_1,t_2,\dotsc, t_n )$ and \( \chi({\tt t})= x_1(t_1)_+ \times x_2(t_2)_+ \times \dotsb \times x_n(t_n)_+\). For integer $m>0$ let 
\beq \label{eq:dist}
e_m({\tt t}+1/m) = m\chi({\tt t}+1/m)(1+m\chi({\tt t}+1/m))^{-1}\equiv\frac{m\chi({\tt t}+1/m)}{1+m\chi({\tt t}+1/m)}
\eeq
where ${\tt t}+1/m= (t_1+1/m,t_2+1/m,\dotsc, t_n+1/m )$. Although $e_m({\tt t}+1/m)$ is {\em not} a member of the annihilating ideal $S^-({\tt t})_a$ of $S^-({\tt t})$ it belongs to $S^-({\tt t}+1/m)_a\supset S^-({\tt t})_a$. Let $e_{\lambda}({\tt t})$ be an approximate identity in $S^-({\tt t})_a$. One can show using the Gelfand representation that 
\[\lim_{\lambda}\omega(e_\lambda({\tt t}))=\lim_{m\rightarrow\infty} \omega(e_m({\tt t}))\]
We omit the details but the reader can convince herself by taking an algebra of functions. 
 
This implies the first part of the theorem. To prove the boundary conditions we use the fact that the spectrum of any element $x\in A$ is bounded by $\norm{x}$. Hence, for $t<-\norm{x}$, $t\unit-x$ has a strictly negative spectrum. Then $(t\unit-x)_-=-(t\unit-x)$ is invertible and its annihilator ideal consists of $0$ alone. Consequently, $F(t,\dotsc, )=0$ for all $t<-\norm{x}$. The other extreme case is proved similarly, $t\unit -x$ being strictly positive for $t>\norm{x}$. Finally, suppose the elements $\{x_1,x_2\dotsc,x_n\}$ are independent. Since $x_{+}$ lies in the closed subalgebra generated by $x$ the definition of independence and equation \ref{eq:dist} implies that the joint distribution function is a product. One proves the last statement using a sequence like (\ref{eq:dist}). 
\end{proof}
We see that, starting from a purely algebraic definition of independence and distributions we can recover their essential properties. In particular, for algebras which are finite or infinite tensor product of finite-dimensional algebras we have the following. 
\begin{propn} \label{prop:struct_annIdeal}
Let $A$ be a finite-dimensional abelian $\cstar$ algebra. Let $x\in \tensor^{\infty} A$ and $x_a$ its annihilating ideal. Suppose $x$ is a finite sum. Then there is a unique (up to permutation) decomposition 
\[ x=\sum a_i P_i \text{ such that } P_iP_j=\delta_{ij}P_j \text{ and } a_i \neq 0 \text{ distinct }\]
Further, there exist polynomials without constant term $g_i$ such that $P_i=g_i(x)$. Thus, $x=\sum_i a_ig_i(x)$. 
Then $x_a$ has an identity $\unit -\sum_i P_i$. 
\end{propn}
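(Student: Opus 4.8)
The plan is to reduce the statement to the finite-dimensional structure theorem (Theorem \ref{thm:structFinite}) and then read off the decomposition, the interpolating polynomials, and the annihilator identity by elementary manipulations with orthogonal projections. The observation that makes the reduction possible is the hypothesis that $x$ is a \emph{finite} sum of atomic basis elements of $\inftens{A}$: since each such basis element is nontrivial in only finitely many tensor slots, there is a single index $k$ so that all of them lie in the unital finite-dimensional subalgebra $A_{[k]}=A_1\tensor\cdots\tensor A_k$, embedded in $\inftens{A}$ via $a\mapsto a\tensor\unit\tensor\unit\tensor\cdots$. Hence $x\in A_{[k]}$, and because this embedding is unital the unit of $A_{[k]}$ coincides with $\unit$. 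So I may work inside a genuine finite-dimensional abelian $\cstar$ algebra.

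First I would apply Theorem \ref{thm:structFinite} to obtain the atomic basis $\{f_1,\dots,f_M\}$ of $A_{[k]}$ and write $x=\sum_s c_s f_s$. Grouping the basis elements according to the distinct \emph{nonzero} values among the $c_s$ --- exactly as in the proof of Theorem \ref{thm:structIndep} --- and setting $P_i=\sum_{s:\,c_s=a_i} f_s$ gives $x=\sum_i a_i P_i$ with the $a_i$ distinct and nonzero. Because the $f_s$ are orthogonal projections, each $P_i$ is a projection and $P_iP_j=\delta_{ij}P_i$, which is the asserted decomposition. Theorem \ref{thm:structFinite} also identifies $\Sp(x)=\{c_s\}$, so the set $\{a_i\}$ is intrinsically the set of nonzero spectral values of $x$; this is what I would use for \emph{uniqueness}. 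Once the multiset of coefficients is collapsed to its distinct nonzero values the values themselves are forced, and each $P_i$ will be pinned down by the polynomial formula below (which depends only on $x$ and on the intrinsic node set), so the decomposition is unique up to relabelling.

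For the polynomials, I would note that orthogonality gives $x^r=\sum_i a_i^r P_i$ for every $r\geq 1$, hence $g(x)=\sum_i g(a_i)P_i$ for any polynomial $g$ with $g(0)=0$ (the vanishing constant term kills the contribution of the complementary projection $\unit-\sum_iP_i$, on which $x$ acts as $0$). It then suffices to Lagrange-interpolate at the distinct nodes $\{0,a_1,\dots,a_m\}$, choosing $g_i$ with $g_i(0)=0$ and $g_i(a_j)=\delta_{ij}$; such a $g_i$ exists precisely because $0$ and the $a_j$ are distinct, and it has no constant term since it vanishes at $0$. This yields $P_i=g_i(x)$ and therefore $x=\sum_i a_i g_i(x)$.

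Finally, for the annihilator I would set $Q=\unit-\sum_i P_i$. A one-line computation using $P_iP_j=\delta_{ij}P_i$ shows $Qx=0$, so $Q\in x_a$. Conversely, if $zx=0$ then $\sum_i a_i zP_i=0$; multiplying by $P_k$ and using orthogonality gives $a_k zP_k=0$, and since $a_k\neq0$ we get $zP_k=0$ for every $k$, whence $z\sum_iP_i=0$ and $zQ=z$. Thus $Q$ is a (two-sided, by commutativity) unit for $x_a$. I expect the only genuinely delicate point to be the reduction in the first paragraph --- making precise that a finite sum of basis elements sits inside a finite-dimensional subalgebra on which Theorem \ref{thm:structFinite} can be invoked, and checking that the units of $A_{[k]}$ and $\inftens{A}$ agree so that $\unit-\sum_iP_i$ is the correct element of the full algebra; everything after that is routine projection algebra.
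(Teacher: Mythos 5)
Your proposal is correct and follows essentially the same route as the paper's proof: reduce to a finite tensor power where $x$ lives, group the atomic-basis coefficients by their distinct nonzero values to obtain the $P_i$, Lagrange-interpolate at the nodes $\{0,a_1,\dots,a_m\}$ with $g_i(0)=0$, and check directly that $\unit-\sum_i P_i$ is a unit for $x_a$. The only divergence is cosmetic: you get uniqueness from the spectral characterization $\Sp(x)\setminus\{0\}=\{a_i\}$ together with $P_i=g_i(x)$, whereas the paper matches two decompositions directly via $xP_iQ_j=a_iP_iQ_j=b_jP_iQ_j$ and a permutation argument; both work, and you additionally spell out the annihilator step that the paper dismisses as trivial.
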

\begin{proof}
Since $x$ is finite sum it may be considered as an element of $\tensor^n A$ for some finite $n$. The space $\tensor^n A$ has a finite atomic basis, say, $\{Y_1, \dotsc, Y_m\}\; (m=2^{\text{dim}(A)})$. Let $x=\sum_{i=1}^m a_i Y_i$ and let $J=\{i: a_i=0\}$. Then $x=\sum_{i\notin J} a_iY_i$. Let $P_i$ be the sum of all $Y_i$ for which the coefficients $a_i$ are equal. then $x=\sum_i a_iP_i$ with $a_i$ distinct and non-zero. Next use Lagrange interpolation to obtain polynomials $g_i$ such that $g_i(0)=0$ and $g_i(a_j)=\delta_{ij}$. To prove uniqueness let $x=\sum_j b_j Q_j$ be another such decomposition. Then $xP_iQ_j=a_iP_iQ_j=b_jP_iQ_j$. Since $\sum_i P_i x=x$ for a fixed $i$ there must be at least one  $j_i$ with $P_iQ_{j_i}\neq 0$ then $a_i=b_{j_i}$. There cannot be more than one such $j_i$ since the $b_j$'s are distinct. Arguing in the reverse direction we conclude that $i\leftrightarrow j_i$ is a permutation. The last statement follows trivially. 
\end{proof}
Let $x=\sum_i a_i P_i$ be as in the proposition. We call this the spectral decomposition of $x$. If $\omega$ is a state define 
\[ \scrp{I}_{\omega}(x)= \sum_i \omega(P_i) P_i\]
The map $\scrp{I}_{\omega}(x)$ can be considered as a ``centroid'' of the possible outcomes of measurement of $x$. 
We can extend the proposition to arbitrary element in $\scrp{A}=\tensor^{\infty}A$ by using a sequence of finite-dimensional projections as above to approximate. However, the proposition suffices for most of our requirements. Now let 
\[Z=\sum_{k=1}^{\infty} X_k,\; X_k \in \tensor^k A\]
$Z$ may {\em not} be a member of $\scrp{A}$ in general as we treat the above as a formal sum. However, we suppose that for real $t$, $(t\unit -Z)_+=(|t\unit-Z|+ (t\unit -Z))/2$ can be expressed as finite sum.  We will see an example below. Then the required identity is given as follows. It is clear that for $\delta>0$ small enough $|(t+\delta)\unit-Z|+ ((t+\delta)\unit -Z)=\sum_ka_k Y_k :a_k >0$ is finite sum where $Y_k$ constitute an atomic basis. Let $P_\delta= \sum_k Y_k$. Then the required identity is given by $P_0= \lim_{\delta \rightarrow 0} P_\delta$. This is essentially a variant of equation \ref{eq:dist} in Theorem \ref{thm:dist}. 
\subsection{Examples}\label{sec:examples}
In this section we consider some examples from standard probability theory. It will be demonstrated that the algebraic approach not only gives a different perspective on some familiar situations it can also provide additional computational tools. First, we review the correspondence between some concepts from the standard theory with our algebraic model. An event in probability theory is a measurable subset of the probability space. The random variable characterizing any (measurable) subset $S$ is its indicator function $I_S$. In the algebraic language it is a projection $Q_S$. The probability of the event corresponds to the expectation value $\omega(Q_S)$ of the projection. In the cases we consider the projections will generally exist in the algebra itself. In some cases we consider infinite formal sums which are {\em not} in the algebra but any finite segment of the sum do belong to the algebra. In the actual computation we always use a ``cut-off'' to restrict to such a finite segment. In the cases where projections are not members of the algebra we can find a sequence (or net) that ``converges in the mean'' to the appropriate projection or indicator function. This situation generally arises in the continuous case which is only touched upon peripherally.   
\be
\item
{\bf Binomial distribution.} Consider again infinite sequences of Bernoulli trials as in the second example of the previous section. We can think of coin-tossing with ``heads'' signaling success. Let $Z$ be the observable (random variable) corresponding to the number of success. What is its d.f.?  Let $n,k$ be a positive integers with $k<n$. We want to find the distribution $F(k:n)$ of $Z$. Recall that $G$ is the 2-dimensional algebra and let $A=\tensor^{n}G$. Let $\{y_0,y_1\}$ be the atomic basis of $G$ with $y_1$ corresponding to success. Set 
\[ 
\begin{split}
Z= &\sum_{\cali{S}} y_1\tensor y_0\tensor\cdots\tensor y_0+\sum_{\cali{S}} 2 y_1\tensor y_1\tensor y_0\cdots\tensor y_0+\cdots + \\
   & \sum_{\cali{S}} r\underbrace{y_1\tensor y_1\tensor \cdots \tensor y_1}_r\tensor \underbrace{y_0\tensor y_0\tensor \cdots \tensor y_0}_{n-r}+\cdots + n
   y_1\tensor y_1\tensor \cdots \tensor y_1 \\
   =& \sum_{r=1}^n r Y_r
\end{split}
\]
Here $\cali{S}$ denotes the distinct permutations of the factors in the tensor product. Thus, the $r$th term $Y_r$ is the sum of all $\binom{n}{r}$ products with $r$ $y_1$'s. Its value is $r$. Note that $Y_rY_s=\delta_{rs}$. We have 
\[U=|k\unit - Z|-(k\unit -Z)=  \sum_{r=k+1}^n rY_r\]
In this case the identity in the annihilator ideal of $U$ exists and is given by the projection operator $P=\sum_{r=0}^k Y_r$. Since the Bernoulli 
$ F(k:n)=\Omega(P)=\sum_{0}^k \binom{n}{k} p^k(1-p)^{n-k}$. Note that we can easily find the distribution in states where the observables are not independent. 
\item
{\bf Waiting time.}
Let us start with a simple version of the problem of waiting time. Suppose we have a binary source with fixed probability distribution emitting a bit per unit time. The waiting time is the time elapsed before the first appearance of 1. It is a random variable or observable $W$ in our formalism. Using the notation above 
\[ W= y_0\tensor y_1\tensor \unit\tensor\cdots+ 2y_0\tensor y_0\tensor y_1\tensor \unit\tensor\cdots+3 y_0\tensor y_0\tensor y_0\tensor y_1\tensor \unit\tensor\cdots+\cdots\]
This is an unbounded infinite sum and does not belong to the algebra. However, for any $t\geq 0$, 
\[
\begin{split}
&F_W(t)\equiv \frac{|t\unit - W|+t\unit-W}{2}  =  \\
& ty_1\tensor \unit+(t-1)y_0\tensor y_1\tensor\unit+\cdots+(t-\floor{t})\underbrace{y_0\tensor \cdots\tensor y_0}_{\floor{t}\text{ factors }}
\tensor y_1\tensor\unit 
\end{split}
\]
is finite (of course, $F_W(t)=0\text{ for } t<0$). 
Here $\floor{t}$ is the largest integer $\leq t$. Using the trick explained before the examples we replace $t$ by $t+\delta$ (this is to take into account the case when $t$ is an integer). The required projection (approximate identity) is 
\[ P_W(t)=  y_1\tensor \unit+y_0\tensor y_1\tensor\unit+\cdots+\underbrace{y_0\tensor \cdots\tensor y_0}_{\floor{t}\text{ factors }}
\tensor y_1\tensor\unit \]
The distribution function in a state $\Omega$ is given by $F_W(t)=\Omega(P_W(t))$. If $\Omega=\omega\tensor\omega\tensor\cdots$ is an infinite product state with $\omega(y_1)=p=1-\omega(y_0)$ then $F_W(t)=\sum_{k=0}^{\floor{t}} p(1-p)^{k}$.  

Next we generalize the problem of waiting time to arbitrary strings. Explicitly, given a string $\xi$ of length $n$ the waiting time is the time before a contiguous stream of bits matching $\xi$ appears. The preceding case is for $\xi=1$. We will only construct the observable corresponding to waiting time $W$ in this general case. It gives a nice illustration of the algebraic techniques. Let $X$ be the tensor representation of $\xi$. Waiting time 0 corresponds to the observable $X\tensor \unit$. We use the following notation. Write $\unit_1$ for the identity in the 2-dimensional space $G$ and $\unit_k=\unit_1\tensor\unit_1\tensor\cdots\unit_1$, the $k$-fold tensor product. The symbol $\unit$ (without subscripts) will be reserved for the identity in $\tensor^{\infty} A$. The element $Y_0=X\tensor \unit$ corresponds to waiting time 0: the first $n$ symbols received match the given string. We expect the element corresponding to waiting time 1 will be ``proportional'' to $Y_1'=\unit_1\tensor X\tensor \unit$. Although  $Y_0$ and $Y_1'$ are projections they need not be orthogonal in the sense $Y_1'Y_0=0$. So they do not correspond to mutually exclusive events. Recall that when interpreted as functions on some measure space projections are indicators of measurable sets (events). We therefore adopt an orthogonalization scheme similar to Gram-Schmidt. The observable $Y_1=Y_1'-Y_1'Y_0$ is projection and satisfies $Y_1Y_0=0$. Viewed as a function it takes value 1 only when the input string is of the $\zeta=b_0\xi\ldots$ and such that the prefix of length $n$ of $\zeta$ does not match $\xi$. It corresponds to waiting time 1. Defining inductively, let 
\[ 
\begin{split}
Y_m &= Y_m'-Y_m'(Y_0+Y_1\cdots+Y_{m-1}) \\
&= \unit_{m}\tensor X\tensor\unit - \unit_{m}\tensor X\tensor\unit(Y_0+Y_1\cdots+Y_{m-1})
\end{split}
\]
It is easily verified that $Y_jY_k=\delta_{jk}Y_k$. The element $W=\sum_{k=0}^\infty kY_k$ corresponds to the waiting time in this case. Again it is not an element of the algebra but $|t\unit- W|+t\unit-W$ is. 
\item
{\bf Markov Chains.} We define a discreet time Markov chain on an observable algebra $(A,\omega)$ as a sequence of {\em positive} and {\em unital} maps $\{\phi_0,\phi_1, \dotsc, \}$ and an initial element $x_0\in A$. Let us confine to discrete chains. Let $\cali{A}=\{x_1, x_2, \dotsc, \}$ be a fixed atomic basis. A chain-state is a sequence $\{z_0,z_1,\dotsc,\}$ where each $z_i\in \cali{A}$. The usual term for what we call chain-state is simply ``state'' but the latter has a very specific meaning in operator algebras. Let $\xi_n=\{z_0, z_1, \cdots, z_n\}$ be a finite segment of the chain-state. We are interested in the transition from $x_0$ to $x_n$ via the path $\xi_n$. The transition probability is defined recursively as follows. 
\[
\begin{split}
&y_1=\phi_0(z_0),\quad y_k= \phi_{k-1}(z_{k-1}y_{k-1})\text{ and }\\
&\text{transition probability }p(z_0\xrightarrow{\xi_n} z_n)=\omega(z_ny_n) \\
\end{split}
\]
Let us examine this definition in the special case of stationary Markov chains. A Markov chain is defined to be stationary if all the transition maps are identical: $\phi_0=\phi_1=\phi_2=\dotsb $. For a stationary chain
\[
\begin{split}
p(z_0\xrightarrow{\xi_n} z_n)& = \omega(z_n\phi(z_{n-1}\phi(z_{n-2}\phi(\dotsm z_1\phi(z_0)))))\\
& =\omega(z_0)\phi(i_n,i_{n-1})\phi(i_{n-1},i_{n-2})\dotsm \phi(i_1,i_0) \\
\end{split}
\]
\ee
Here $\phi(i,j)$ is the $(ij)$th matrix element of $\phi$ with respect to the basis $\cali{A}$ and $z_k=x_{i_k}$. This looks very similar to quantum transition probability. In the later case the$x_i$ are projections on a Hilbert space. Further, when we consider transitions over all possible paths then we get an analogue of Feynman's ``sum over paths'' for total transition probability. 
\subsection{Limit theorems}
The limit theorems of probability theory are important for its theoretical structure as well as its empirical justification. We will be primarily concerned with the bounded case where the proofs are simpler. We state two of these but prove only the {\bf weak law of large numbers}. From information theory perspective it is perhaps the most useful limit theorem. Let $X_1,X_2,\cdots, X_n$ be independent, identically distributed (i.i.d) random variables on a probability space $\Omega$ with probability measure $P$. Let $\mu$ be the mean of $X_1$ (hence any $X_i$). We assume that the the variance $E(X_1-\mu)^2$ is bounded. Here, $E(X)$ denotes the expectation value of random variable $X$. 

\bi
\item {\bf Weak law of large numbers}. Given $\epsilon>0$ 
\[\lim_{n\rightarrow \infty}P(|S_n=\frac{X_1+\cdots +X_N}{n}-\mu |>\epsilon)=0\]
\item {\bf Central limit theorem}. If $0<E(X_1^2)=\sigma<\infty$ then for any real $x$ as $n\rightarrow \infty$
\[P(\frac{S_n}{\sqrt{n}} \leq x)\rightarrow \Phi(x)= \frac{1}{\sqrt{2\pi}}\int_{-\infty}^x \exp{(-(t-\mu)^2/2\sigma)} \intd x\]
\ei
A few comments about these famous limit theorems. These are statements about different types of convergence \cite{Billingsley}. The theorems can be strengthened but since we are dealing with bounded random variables the above formulations suffice. These theorems require assigning of probabilities. All we have at our disposal is the algebra and one or more positive functionals (states) which give us expectation values. But we have already seen how to define probability distribution functions. What we need are appropriate projections or approximations to them. Given a self-adjoint observable $x$ and a real number $a$ write $x-a$ for the element $x-a\unit$. Let $A(x-a)_+$ be the (two-sided) ideal generated by the positive part of $x-a$. Let $e_n=(x-a)_+[(x-a)_+ +\delta_n]^{-1}$ where $0<\delta_n$ such that $\lim_{n\rightarrow \infty}\delta_n=0$. Then it can be shown that for any $y\in A(x-a)_+$, $\lim_{n\rightarrow \infty} e_ny\rightarrow y$ in the norm. Hence, $\{e_n\}$ is an increasing {\em sequence} approximating identity (see \cite{Tak1}). We write $\mathbb{ P}(x>a)$ for this approximate identity in $A(x-a)_+$. It is not unique but that does not matter since all the limits that we use it to define are independent of the particular choice. The probability corresponding to the ``event'' $x>a$ is defined to be $P(x>a)=\omega(\mathbb{ P}(x>a))=\lim_{n\rightarrow \infty} \omega(e_n)$. Similarly we can define $P(x<a)=\omega(\mathbb{ P}(x<a))$ where $\mathbb{ P}(x<a)=\{f_n\}$ is an approximate identity in the ideal $A(x-a)_-$ obtained by replacing $(x-a)_+$ by $(x-a)_-$ in $e_n$. We can define more complicated events by algebraic operations but it is not necessary for what follows. We also note that although we use probabilistic language in the statements of the results below all the expressions are actually defined in a strictly algebraic setting without reference to any underlying probability space. 
\begin{lem} [Chebysev inequality]
Let  $x, y \in A$ be self-adjoint where $(A,\omega)$ is an observable algebra and $y\geq 0$. For any number $\epsilon>0$ we have 
\[P(y>\epsilon) \leq \frac{\omega(y)}{\epsilon} \text{ and } \]
\[P(|x-\omega(x)| >\epsilon ) \leq \frac{\omega([x-\omega(x)]^2)}{\epsilon^2} \]
\end{lem}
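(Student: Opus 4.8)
The plan is to derive both inequalities from the algebraic Markov inequality $P(y>\epsilon)\leq \omega(y)/\epsilon$ for a positive element $y$, and then obtain the second from the first by a change of variable. Throughout I would work with the explicit approximate identity used to define the probabilities, namely $e_n=(y-\epsilon)_+[(y-\epsilon)_++\delta_n]^{-1}$ with $\delta_n\downarrow 0$, so that by definition $P(y>\epsilon)=\lim_{n\to\infty}\omega(e_n)$.

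First I would establish the single scalar estimate underlying everything. Since $e_n$ and $y$ both lie in the commutative subalgebra $C(y)\cong C(\Sp(y))$ generated by $y$, the continuous function calculus (as used in the proof of Theorem~\ref{thm:dist}) lets me check the operator inequality $\epsilon e_n\leq y$ by verifying the corresponding scalar inequality on $\Sp(y)\subset[0,\infty)$. For $u\geq 0$ one has $\epsilon\,(u-\epsilon)_+[(u-\epsilon)_++\delta_n]^{-1}\leq u$: it is trivially $0\leq u$ when $u\leq\epsilon$, and when $u>\epsilon$ the fraction is strictly less than $1$ so the left side is $<\epsilon<u$. Because the function calculus is positive, $y-\epsilon e_n\geq 0$ in $A$, and since $\omega$ is a positive functional it preserves this order, giving $\epsilon\,\omega(e_n)\leq\omega(y)$ for every $n$. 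Letting $n\to\infty$ yields $\epsilon\,P(y>\epsilon)\leq\omega(y)$, which is the first inequality.

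For the second inequality I would set $z=x-\omega(x)\unit$, which is self-adjoint because $\omega(x)$ is real for self-adjoint $x$ (a state maps self-adjoint elements to real numbers). Then $z^2\geq 0$, and the real content is the identification $P(|x-\omega(x)|>\epsilon)=P(z^2>\epsilon^2)$. This holds because, in the Gelfand picture $A\cong C(X)$, the positive parts $(|z|-\epsilon)_+$ and $(z^2-\epsilon^2)_+$ are supported on the same set $\{|z|>\epsilon\}=\{z^2>\epsilon^2\}$, hence generate the same annihilator ideal and admit approximate identities with the same limit under $\omega$. Applying the Markov inequality already proved, with $y=z^2=[x-\omega(x)]^2$ and the threshold $\epsilon^2$ in place of $\epsilon$, then gives $P(z^2>\epsilon^2)\leq\omega([x-\omega(x)]^2)/\epsilon^2$, and the two facts combine to the claim.

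The main obstacle is not the scalar estimate but the bookkeeping around the definition of $P(y>\epsilon)$ as a limit: I must ensure that the operator inequality $\epsilon e_n\leq y$ is available for the very net $\{e_n\}$ whose $\omega$-limit defines the probability, and, for the second inequality, that passing from the ideal $A(|z|-\epsilon)_+$ to the ideal $A(z^2-\epsilon^2)_+$ really leaves the limiting expectation unchanged. Both points reduce to the fact that all the relevant elements sit in a single abelian subalgebra on which $\Sp(\cdot)$ and the function calculus behave like ordinary functions on $X$; once that is invoked, the remaining steps are the routine positivity manipulations above.
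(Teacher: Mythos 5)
Your proposal is correct and follows essentially the same route as the paper: first the Markov-type bound $\epsilon\,\omega(e_n)\leq\omega(y)$ for the approximate identity in $A(y-\epsilon)_+$, then the reduction of the second inequality to the first via the coincidence of the ideals $A(|z|-\epsilon)_+$ and $A(z^2-\epsilon^2)_+$ with $y=[x-\omega(x)]^2$ and threshold $\epsilon^2$. The only cosmetic differences are that you verify $\epsilon e_n\leq y$ by a scalar check in the function calculus where the paper manipulates $\omega(ye_n)=\omega([y-\epsilon]_+e_n)+\epsilon\,\omega(e_n)$ directly, and you justify the ideal coincidence via supports in the Gelfand picture where the paper uses the factorization $|x|^2-\epsilon^2=(|x|+\epsilon)(|x|-\epsilon)$ and invertibility of $|x|+\epsilon$.
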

\begin{proof}
Let $\{e_n\}$ be an approximate identity in the ideal $A(y-\epsilon)_+$. By definition $e_n\leq \unit$. Hence, \(\omega(y)=\omega(ye_n)+\omega(y(\unit-e_n))\geq \omega(ye_n)\). Since $(y-\epsilon)_-$ annihilates the ideal $A(y-\epsilon)_+$, \(\omega(ye_n)=\omega([y-\epsilon]e_n)+\omega(\epsilon e_n)=\omega([y-\epsilon]_+e_n)+\epsilon\omega(e_n)\geq \epsilon\omega(e_n)\). Hence, $\omega(y)\geq \epsilon\omega(e_n)$. Taking limits we obtain the first inequality. 
Observe that for any $x\in A$, $P(|x|>\epsilon)=P(|x|^2>\epsilon^2)$ for the ideals $A(|x|-\epsilon)_+$ and  $A(|x|^2-\epsilon^2)_+$ coincide. This follows from the identities $|x|^2-\epsilon^2= (|x|+\epsilon)(|x|-\epsilon)$ and hence $(|x|^2-\epsilon^2)_+= (|x|+\epsilon)(|x|-\epsilon)_+$ plus the fact that $|x|+\epsilon$ is invertible. Hence the second inequality follows from the first by putting $y=(x-\omega(x))^2$ and using $\epsilon^2$ in place of $\epsilon$. 
\end{proof}
We will prove next a convergence result which implies the weak law of large numbers. 
\begin{thm} [Law of large numbers (weak)] \label{thm:weak-law} 
If $x_1,\dotsc,x_n,\dotsc$ are  
$\omega$-\\independent self-adjoint elements in an observable algebra and $\omega(x_i^k)=\omega(x_j^k)$ for all positive integers $i,j\text{ and }k$ (they are identically distributed) then 
\[\lim_{n\rightarrow \infty} \omega(|\frac{x_1+\dotsb+x_n}{n}-\mu|^k)=0 \text{ where } \mu=\omega(x_1) \text{ and } k>0\]
\end{thm}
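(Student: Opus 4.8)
The plan is to reduce the whole statement to the second moment --- the only place independence genuinely enters --- and then to interpolate up and down to an arbitrary exponent $k$.

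First I would center the variables. Equality of first moments gives $\omega(x_i)=\omega(x_1)=\mu$ for every $i$, so setting $u_i=x_i-\mu\unit$ we have $\omega(u_i)=0$ and
\[ \frac{x_1+\dotsb+x_n}{n}-\mu\unit=\frac1n\sum_{i=1}^n u_i.\]
The heart of the argument is the variance estimate. Expanding the square,
\[ \omega\!\left(\Bigl(\frac1n\sum_{i} u_i\Bigr)^2\right)=\frac1{n^2}\sum_{i,j}\omega(u_iu_j).\]
For $i\neq j$ the $\omega$-independence of $x_i$ and $x_j$ (taking $z_l=\unit$ for $l\neq i,j$ in the definition of independence) gives $\omega(x_ix_j)=\omega(x_i)\omega(x_j)=\mu^2$, and a direct expansion then yields $\omega(u_iu_j)=0$: every off-diagonal term vanishes. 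For $i=j$ the identical-distribution hypothesis gives $\omega(u_i^2)=\omega(x_i^2)-\mu^2=\sigma^2$, one fixed constant independent of $i$. Hence only the $n$ diagonal terms survive and
\[ \omega\!\left(\Bigl(\frac1n\sum_{i} u_i\Bigr)^2\right)=\frac{\sigma^2}{n}\xrightarrow[n\to\infty]{}0,\]
which already settles $k=2$.

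To pass to a general $k>0$, write $y_n=\bigl|\tfrac1n\sum_i u_i\bigr|\geq 0$. For $0<k\leq 2$ I would invoke monotonicity of the $L^p$-type quantities attached to $\omega$: since $t\mapsto t^{k/2}$ is concave on $[0,\infty)$, Jensen's inequality for the probability measure that $\omega$ induces on $C(\Sp(y_n))$ through the Gelfand map gives $\omega(y_n^k)=\omega\bigl((y_n^2)^{k/2}\bigr)\leq \omega(y_n^2)^{k/2}\to0$, needing nothing beyond the variance estimate. For $k>2$ I would instead use boundedness: from $y_n^{k-2}\leq\norm{y_n}^{k-2}\unit$ and order-preservation under multiplication by the commuting positive element $y_n^2$,
\[ \omega(y_n^k)=\omega(y_n^{k-2}y_n^2)\leq \norm{y_n}^{k-2}\,\omega(y_n^2).\]
Each $x_i$, being self-adjoint in a $\cstar$ algebra, is bounded, and identical distribution makes these norms common (in the concrete model where the $x_i$ are copies of a single element of $A$ occupying successive factors of $\inftens{A}$ this is immediate), so $\norm{y_n}\leq\norm{S_n}+|\mu|\leq M+|\mu|=:C$ uniformly in $n$, and therefore $\omega(y_n^k)\leq C^{k-2}\sigma^2/n\to0$.

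The expansion and the two elementary inequalities are routine. The one genuinely delicate point is the uniform norm bound used when $k>2$: equality of $\omega$-moments constrains the values of $\omega$ but not, a priori, the operator norms $\norm{x_i}$ unless $\omega$ is faithful or one argues inside the GNS representation (or the tensor-product copy model just mentioned). Accordingly I would either adjoin the natural hypothesis that the $x_i$ share a common distribution function --- hence a common spectrum and a common norm --- or restrict the general-$k$ claim to that setting; for $0<k\leq2$ no such bound is required and the conclusion depends solely on the $\sigma^2/n$ variance estimate.
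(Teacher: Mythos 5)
Your $k=2$ computation is exactly the paper's, and your route from there to general $k$ is genuinely different. The paper handles even $k=2m$ by expanding $s_n^{2m}$ multinomially: independence together with $\omega(u_i)=0$ kills every term containing a first power, the surviving terms number $O(n^{m})$, and each is a product of the \emph{fixed} moments $\omega(x_1^j)$, $j\leq 2m$ (fixed by the identical-distribution hypothesis), so $\omega(s_n^{2m})=O(n^{-m})$; odd and fractional powers are then recovered by a Cauchy--Schwarz/polynomial-approximation step. You instead interpolate off the single second-moment estimate: Jensen (equivalently, monotonicity of $L^p$ norms for the probability measure that $\omega$ induces on ${\tt sp}(y_n)$ via the Gelfand map) for $0<k\leq 2$, and the operator bound $\omega(y_n^k)\leq\norm{y_n}^{k-2}\omega(y_n^2)$ for $k>2$. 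The concern you raise about the $k>2$ case is legitimate for \emph{your} method --- equality of $\omega$-moments does not control the norms $\norm{x_i}$, so the constant $C$ need not exist under the stated hypotheses --- but the paper's multinomial argument shows the conclusion does not actually require any such uniform norm bound, so adjoining it as a hypothesis would weaken the theorem unnecessarily. The clean repair within your own framework is to replace the norm bound by the paper's even-moment estimate: for any $k>2$ choose an even integer $2m\geq k$, bound $\omega(y_n^{2m})=O(n^{-m})$ by the multinomial count, and then use the same $L^p$-monotonicity you already invoke for small $k$ to get $\omega(y_n^{k})\leq\omega(y_n^{2m})^{k/2m}\rightarrow 0$. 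With that substitution your proof is complete under exactly the paper's hypotheses, and your Jensen step has the minor advantage of covering non-integer exponents $k\in(0,2]$ directly, which the paper's write-up only handles implicitly.
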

\begin{proof}
We may assume $\mu=0$ (by reasoning with $x_i-\omega(x_i)$ instead of $x_i$). First we prove the statement for $k=2$. Then $\omega(|\frac{x_1+\dotsb+x_n}{n}|)^2=\sum_i\omega(x_i^2)/n^2=\omega(x_1^2)/n$. The first equality follows from independence ($\omega(x_ix_j)=\omega(x_i)\omega(x_j)=0\text{ for }i\neq j$) the second from the fact that they are identically distributed. The case $k=2$ is now trivial. Now let $k=2m$. Then $|x_1+\dotsb +x_n|^k=(x_1+\dotsb +x_n)^k$. Put $s_n=(x_1+\dotsb +x_n)/n$. Expanding $s_n^k$ in a multinomial series we note that  independence and the fact that $\omega(x_i)=0$ implies that all terms in which at least one of the $x_i$ has power 1 do not contribute to $\omega(s_n^k)$. The total number of the remaining terms is $O(n^{m})$. Since the denominator is $n^{2m}$ we see that $\omega(s_n^k)\rightarrow 0$. Since for any $x\in A$, $|x|=(x^2)^{1/2}$ can be approximated by polynomials in $x^2$ we conclude that $\omega(|s_n|)\rightarrow 0$. Finally, using the Cauchy-Schwartz type inequality $\omega(|s_n|^{2r+1})\leq \omega(s_n^2)\omega(s_n^{2r})$ we see that the theorem is true for all $k$. 
\end{proof}
\begin{cor_t} \label{cor:prob_bounds}
Let $x_1,\dotsc, x_n \text{ and } \mu$ be as in the Theorem and set $s_n=(x_1+\dotsb+x_n)/n$. Then for any $\epsilon >0$ there exist $n_0$ such that for all $n>n_0$ 
\[ P(|s_n-\mu|>\epsilon) <\epsilon \]
\end{cor_t}
\begin{proof}
Using Chebysev inequality we have \( P(|s_n-\mu|>\epsilon)= P(|s_n-\omega(s_n)|>\epsilon) \leq \frac{\omega(|s_n-\mu|^2)}{\epsilon^2}\). As $\omega(|s_n-\mu|^2)\rightarrow 0$ (Theorem \ref{thm:weak-law}) there is $n_0$ such that $\omega(|s_n-\mu|^2)<\epsilon^3$ for $n>n_0$. 
\end{proof}
\section{Communication and Information}\label{sec:info}
We now come to our original theme: an algebraic framework for communication and information processes. We can view information as a measure of our state of ignorance or uncertainty. Mathematically, it is equivalent to some measure associated with a probability distribution of some physical quantity which we identify  with an observable. Thus any manipulation of the quantity, for example, transmitting it or measuring it is given by some operation on the observable. Since our primary goal is the modeling of information processes we refer to the simple model of communication in the Introduction and model different aspects of it. 
\subsection{Source and coding}
\begin{Def}
A source  is a pair $\scrp{S}=(X,S)$ where $X\subset A$, $A$ a $\cstar$ algebra and $S$ is a set of states. A source is static if $S$ consists of single state. It is discrete if $X$ is countable. 
\end{Def}
This definition abstracts the essential properties of a source. A real source could be an animate (human speech, for example) or inanimate object (a radio transmitter, for example). Its output can be considered discrete, for example, a keyboard with a fixed alphabet or continuous like radiation from a star. In this work we will be mainly concerned with discrete sources. Then $X$ will be called the {\em source alphabet}. We assume that at each instant there is a probability distribution on the letters of the alphabet characterizing the {\em state} of the source at that instant. Thus a discrete source is a countable set of random variables. In the algebraic view it is a sequence of elements $X$ of a $\cstar$ algebra. The set of states $S$, called the states of the source, provide the probability distributions. If this distribution does not change (equivalently $S$ consists of a single element)  then we have a static source. We will mostly deal with static sources in this work. When we model transmission of information as a Markov process the state of the source is identified with the initial probability distribution. There is dual view. Suppose that a source $\mathscr{S}$ emits letters from a finite alphabet. Then the set $X$ in the above definition is a subset of the atomic basis (corresponding to the alphabet) of the algebra $A$. For a state $\omega$ define 
\[\cali{O}_{\omega}=\sum_{i=1}^n \omega(x_i)x_i, \; \{x_1, \dotsc, x_n\}\text{ an atomic basis }\]
We say that $\cali{O}_{\omega}$ is the output of the source in state $\omega$. Intuitively, $\cali{O}_{\omega}$ is a kind of mean ``point'' in the space of outputs (compare it with the notion of center of mass in mechanics). More importantly, it facilitates calculation of important quantities and has close analogy with the quantum case. The quantum analogue may be pictured as follows. The source outputs ``particles'' in definite ``states'' $x_i$ with probability $p_i=\omega(x_i)$. Note that here state corresponds to a projection operator. A measurement for $x_i$ means applying the dual operator $\omega_i\;(\omega_i(x_j)=\delta_{ij})$ giving $\omega_i(\cali{O}_\omega)= p_i$. 

Let $\mathscr{Z}=(X,\omega)$ be a static discrete source. Suppose every $x\in X$ belongs to a finite-dimensional subalgebra generated by a (finite) set of $\omega$-independent elements. Then using the Theorem \ref{thm:structIndep} we may assume that $A=\inftens B$ where $B$ is finite-dimensional abelian $\cstar$ algebra and $\omega$ is an (infinite) product state. In this case, each element of $X$ is a tensor product of elements of an atomic basis of $B$. In the rest of the paper we assume that $X$ is the product basis of atomic elements. For example, if $B$ is the two dimensional algebra with atomic basis $\{y_0,y_1\}$ then $X$ is the set of elements of the form 
$z_1\tensor z_2\tensor\dotsb \tensor z_k\tensor\unit\tensor\unit\tensor\dotsb$ where $z_i\in\{y_0,y_1\}$. 
\subsection{Source coding}
Let $B$ be a finite-dimensional $\cstar$ algebra and $A=\inftens B$ . We consider $\tensor^n B$ as a subalgebra of $A$ via the standard embedding (all ``factors'' beyond the $n$th place equal $\unit$). Let $X_n$ be its atomic basis in some fixed ordering and let $X=\bigcup_n X_n$. We can consider $B$ as the source alphabet and $X_n$ as strings of length $n$. Let $B'$ be another finite-dimensional $\cstar$ algebra and $A'=\inftens B'$. A source coding is a linear map $f:B\rightarrow T= \subset \sum_{k\geq 1}^m \tensor^k B' $. Here $T$ is the linear subspace. It induces a (linear) map 
\[\tensor^n f:\tensor^n B\rightarrow A'\text{ given by } \tensor^n f(x_1\tensor\dotsb\tensor x_n)=f(x_1)\tensor\dotsb\tensor f(x_n) \] 
$\tensor^nf$ extends to a unique map $F: A\rightarrow A'$. Note that we first induce a map on $\tensor^n B$, $n=1,2,\dotsc, $ and {\em then} lift it to $A$. We allow the map $f$ to take values that are not simple products. However, for classical communication we require that each atomic basis element $x_i\in B$ be mapped to a tensor product of atomic basis elements. Since we are dealing with classical information in this paper it will be implicitly assumed that all the codes are classical. Let us consider an example to clarify these points. 

\vspace{.25cm}
\noindent
{\bf Example.} Let $\{x_0,x_1,x_2,x_3\}$ be an atomic basis for $B$. Let $B'=G$ with atomic basis $\{y_0,y_1\}$. Define $f_1$ by $f_1(x_0)=y_0,f_1(x_1)= y_1, f_1(x_2)=y_0\tensor y_1 \text{ and } f_1(x_3)=y_1\tensor y_0$. Denote by $\hat{f}_1$ its extension to tensor products. Since $\hat{f}_1(x_0\tensor x_1)=y_0\tensor y_1=\hat{f}_1(x_2)$, $\hat{f}_1$ is not injective. Hence it cannot be inverted on its range. Consider next the map $f_2(x_0)=y_0,f_2(x_1)=y_0\tensor y_1,f_2(x_2)=y_0\tensor y_1\tensor y_1\text{ and }f_2(x_3)=y_1\tensor y_1\tensor y_1$. This map is invertible but one has to look at the complete product before finding the inverse. It is not {\em prefix-free}. 

\vspace{.25cm}
\noindent
Now going back to the general formulation a code $f:B\rightarrow T$ is defined to be prefix-free if for distinct members  $x_1,x_2$ in an atomic basis of $B$, $f'(x_1)f'(x_2)=0$ where $f'$ is the map $f': B\rightarrow \inftens B'$ induced by $f$. That is, distinct elements of the atomic basis of $B$ are mapped to {\em orthogonal} elements. Recall that two elements $x,y$ of an algebra are considered orthogonal if their product $xy=0$.  \footnote{The use of the term ``orthogonal'' may be questionable since there is no scalar product. But let us observe that the projection operators corresponding to two pure states in quantum mechanics have algebraic product 0 if and only if they are orthogonal.} Now, in the standard formulation an alphabet is a finite set and a code is a map from $Y\rightarrow Z^+$ where $Y,Z$ are alphabets and $Z^+$ is the set of non-empty finite strings from $Z$. The definition of prefix-free in this case is clear. In the algebraic language the free monoidal structure defined by concatenation is replaced by the tensor structure. Then the
``code-word'' $ z_1\tensor z_1\tensor\dotsb \tensor z_k \tensor \unit\tensor \unit\tensor\dotsb$ is not orthogonal to another $ z'_1\tensor z'_1\tensor\dotsb \tensor z'_m \tensor \unit\tensor \unit\tensor\dotsb$ with $k\leq m$ if and only if $z_1=z'_1,\dotsc, z_k=z'_k$. We observe that one has to be careful about correspondence between the two approaches. For example, one might be tempted to identify the identity $\unit$ with the empty string but the $\unit$ is the sum of the members of an atomic basis! The binary operation ``+'' has a relatively lesser role in the classical formalism but it is crucial in the quantum framework (via superposition principle). Our first result is a useful and well-known inequality proved using algebraic techniques. 

\begin{lem}[Kraft inequality]\label{lem:Kraft}
Let $B$ be an $n$-dimensional abelian $\cstar$ algebra. Corresponding to a finite sequence $k_1\leq k_2\leq \dotsb \leq k_m$ of positive integers let $\alpha_1,\dotsc, \alpha_m$ be a set of prefix-free elements in $\sum_{i\geq 1} \tensor^i B$ such that $\alpha_i\in \tensor^{k_i} B$. Further, suppose that each $\alpha_i$ is a tensor product of elements from a fixed atomic basis of $B$. Then 
\beq \label{eq:kraft}
 \sum_{i=1}^m n^{k_m-k_i} \leq n^{k_m}
\eeq
\end{lem}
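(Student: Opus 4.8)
The plan is to reinterpret the inequality as a counting statement inside the finite-dimensional algebra $\tensor^{k_m} B$ and to carry out that count algebraically. By Theorem~\ref{thm:structFinite}, $B$ has an atomic basis $\{x_1,\dots,x_n\}$ (unique up to permutation), and consequently $\tensor^{k_m} B$ has an atomic basis consisting of the $n^{k_m}$ strings $x_{s_1}\tensor\dotsb\tensor x_{s_{k_m}}$, its identity being $\unit=\sum x_{s_1}\tensor\dotsb\tensor x_{s_{k_m}}$, the sum of all of them.

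First I would lift each code word to $\tensor^{k_m} B$. Writing $\alpha_i=z_{i,1}\tensor\dotsb\tensor z_{i,k_i}$ with each $z_{i,l}$ in the atomic basis of $B$, set $P_i=\alpha_i\tensor\unit\tensor\dotsb\tensor\unit\in\tensor^{k_m} B$ with $k_m-k_i$ trailing copies of the identity of $B$. Expanding each trailing $\unit=\sum_{s=1}^n x_s$ and using bilinearity of the tensor product, $P_i$ becomes a sum of exactly $n^{k_m-k_i}$ distinct atomic basis elements of $\tensor^{k_m} B$; in particular each $P_i$ is a projection.

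Next I would use the prefix-free hypothesis. It says precisely that $\alpha_i\alpha_j=0$ for $i\ne j$ (as elements of $\inftens B$ under the standard embedding), and since all tensor factors beyond the $k_m$-th contribute only $\unit\cdot\unit=\unit$, truncation to the first $k_m$ factors gives $P_iP_j=0$. Computing factorwise and using $z_{i,l}z_{j,l}=\delta\,z_{i,l}$ for atomic basis elements, this vanishing is equivalent to $\alpha_i$ and $\alpha_j$ disagreeing somewhere within the first $\min(k_i,k_j)$ places, i.e. to neither being a prefix of the other — exactly the prefix-free condition. Hence the sets of atomic basis elements appearing in the $P_i$ are pairwise disjoint, and $P=\sum_{i=1}^m P_i$ is again a projection whose expansion contains precisely $\sum_{i=1}^m n^{k_m-k_i}$ distinct atomic basis elements.

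Finally, to extract the inequality I would compare $P$ with $\unit$. Since the support of $P$ is a subset of the full atomic basis of $\tensor^{k_m} B$, the element $\unit-P$ is a sum of the remaining basis elements and is therefore positive, so $P\le\unit$. Applying the positive functional $\tau$ on $\tensor^{k_m} B$ that sends every atomic basis element to $1$ (positivity holds because a positive element has nonnegative coefficients in the atomic basis) yields $\tau(P)\le\tau(\unit)$, that is $\sum_{i=1}^m n^{k_m-k_i}\le n^{k_m}$. I expect the only real care to lie in the bookkeeping of the lifting step — padding code words of unequal length to the common length $k_m$ and verifying the factorwise product — after which the disjointness and the final count are immediate; conceptually this is exactly the classical ``covering disjoint cylinder sets'' argument transcribed into the algebra.
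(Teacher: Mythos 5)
Your proof is correct and follows essentially the same route as the paper's: both identify the ``cylinder'' of strings in $\tensor^{k_m}B$ extending a codeword $\alpha_i$ as a set of $n^{k_m-k_i}$ atomic basis elements, use prefix-freeness to get orthogonality (hence disjoint supports), and conclude by comparing against the $n^{k_m}$-element atomic basis of $\tensor^{k_m}B$. The only cosmetic difference is that the paper subtracts the dimensions sequentially (forcing each successive $\alpha_i$ into the orthogonal complement of the earlier cylinders and using that the last one is nonzero), whereas you sum the disjoint projections at once and bound $P\le\unit$ via a trace functional.
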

\begin{proof}
Let $\mathbbm{b}=\{y_1,\dots,y_n\}$ be the fixed atomic basis of $B$ and set $k_m=M$. We can then restrict our attention to the finite-dimensional algebra $Z=\sum_{i=1}^M \tensor^i B$. Let $\alpha_1=z^1_1\tensor\dotsb \tensor z^1_{k_1}\tensor \unit\tensor \dotsb \tensor\unit$ where $z^1_i\in \mathbbm{b}$. Let $\beta=z^1_1\tensor\dotsb \tensor z^1_{k_1}$ and 
\[ Z_1= \{\beta\tensor \gamma : \gamma \in \tensor^{M-k_1} B\} \]
Then $Z_1\subset \tensor^M B $ is a subalgebra (without unit) of dimension $n^{M-k_1}$. The assumption that $\alpha_i$ are prefix-free implies $\alpha_2,\alpha_3,\dotsc,\alpha_M$ must be in $Z_1'$ the ``orthogonal'' complement to $Z_1$ in $Z$. Dimension of $Z'_1=n^M-n^{M-k_1}$. Repeating this argument with $\alpha_2,\dotsc,\alpha_{k_{m-1}}$ we conclude that $\alpha_{k_m}$ must be in a subspace of dimension $n^M-n^{M-k_1}-n^{M-k_2}-\dotsb -n^{M-k_{m-1}}$. Since $\alpha_{k_m}$ is non-zero $n^M-n^{M-k_1}-n^{M-k_2}-\dotsb -n^{M-k_{m-1}}\geq 1$. This is equivalent to the relation (\ref{eq:kraft}). 
\end{proof}
With the notation of the lemma we call the sequence $W=\{\alpha_1,\dotsc, \alpha_m\}$ decipherable if the tensor product of any two distinct finite ordered sequence of elements from $W$ are distinct. The sequences may have repeated elements. The Kraft inequality is valid for decipherable sequences \cite{McMillan}. However, the proof is essentially combinatorial. The Kraft inequality also provides a sufficiency condition for prefix-free code \cite{Ash,CoverT}. Thus the existence of a decipherable code of word-lengths $(k_1,k_2,\dotsc,k_m)$ implies the existence of a prefix-free code of same word-lengths. In the following, we restrict ourselves to prefix-free codes. If $g:A \rightarrow \tensor^{\infty}B$ is a prefix-free code then it maps orthogonal elements to orthogonal elements. It is therefore an algebra isomorphism (a one-to-one homomorphism). Next we have a technical lemma that is useful in finding bounds. 
\begin{lem}
Let $f$ be a continuous real function on $(0,\infty )$ such that $xf(x)$ is convex and $\lim_{x\rightarrow 0} xf(x)=0$. Let $A$ be a finite-dimensional $\cstar$ algebra with atomic basis $\{x_1,\dotsc, x_n\}$ and $\omega$ a state on $A$. Then for any set of numbers $\{a_i\; :i=1,\dotsc, n;\; a_i>0\text{ and } \sum_i a_i\leq 1\}$ we have 
\[ \omega(\sum_i f(\frac{\omega(x_i)}{a_i})x_i) \geq f(1)\]
\end{lem}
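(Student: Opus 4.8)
The plan is to collapse the operator inequality to a scalar convexity statement and then apply Jensen's inequality against an augmented, properly normalized weight. First I would exploit that $\{x_1,\dots,x_n\}$ is an atomic basis, so the $x_i$ are orthogonal projections with $x_ix_j=\delta_{ij}x_i$ and $\sum_i x_i=\unit$. Writing $p_i=\omega(x_i)$, positivity of $\omega$ gives $p_i\geq 0$, and $\sum_i p_i=\omega(\sum_i x_i)=\omega(\unit)=1$, so $(p_i)$ is a probability vector. Applying $\omega$ to the element $\sum_i f(\omega(x_i)/a_i)x_i$ and using linearity together with $\omega(x_i)=p_i$ (and the function calculus of the earlier subsection, by which $f$ acts diagonally on the atomic basis) reduces the left-hand side to the scalar sum $\sum_i p_i f(p_i/a_i)$. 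The claim thus becomes $\sum_i p_i f(p_i/a_i)\geq f(1)$.

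Next I would introduce the auxiliary function $g(x)=xf(x)$, which by hypothesis is convex on $(0,\infty)$ and satisfies $\lim_{x\to 0}g(x)=0$; a convex function on an open interval with a finite limit at an endpoint extends to a convex function on the closed interval, so $g$ extends to $[0,\infty)$ with $g(0)=0$. The elementary identity $p_i f(p_i/a_i)=a_i\,g(p_i/a_i)$ then rewrites the target as $\sum_i a_i\,g(p_i/a_i)\geq g(1)$, since $g(1)=f(1)$.

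The key step, and the one that makes the hypotheses fit, is to restore the missing mass. Setting $a_0=1-\sum_{i=1}^n a_i\geq 0$ and $p_0=0$, the enlarged family $\{a_0,a_1,\dots,a_n\}$ is a genuine probability vector summing to $1$, while $\sum_{i=0}^n p_i=1$ still holds. Jensen's inequality for the convex $g$ then yields
\[
\sum_{i=0}^n a_i\,g(p_i/a_i)\;\geq\; g\left(\sum_{i=0}^n a_i\,(p_i/a_i)\right)\;=\;g\left(\sum_{i=0}^n p_i\right)\;=\;g(1)\;=\;f(1).
\]
Because $g(0)=0$, the $i=0$ term contributes $a_0\,g(0)=0$, so the sum over $i\geq 1$ alone already meets the bound, which is exactly the assertion.

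I expect the main obstacle to be precisely this normalization: a naive application of Jensen to the weights $a_i$ after rescaling by $s=\sum_i a_i\leq 1$ produces the bound $f(1/s)$ rather than $f(1)$, and these are not comparable without monotonicity of $f$, which is not assumed. The hypothesis $\lim_{x\to 0}xf(x)=0$ is exactly what permits the slack weight $a_0$ to be appended at no cost, and I would take care to justify the continuous convex extension of $g$ to $0$ and to dispose of the degenerate cases ($p_i=0$, or $s=1$ so $a_0=0$) through the $g$-formulation, which remains well defined even where $f$ itself may not be.
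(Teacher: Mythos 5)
Your proposal is correct and follows essentially the same route as the paper: reduce the operator inequality to the scalar statement $\sum_i p_i f(p_i/a_i)\geq f(1)$, apply Jensen's inequality to the convex function $g(x)=xf(x)$, and absorb the deficit $1-\sum_i a_i$ by appending a slack weight paired with $p=0$, using $\lim_{x\to 0}xf(x)=0$ to set $g(0)=0$. Your version is somewhat more careful than the paper's (justifying the convex extension of $g$ to $0$ and flagging why naive renormalization by $s=\sum_i a_i$ would only give $f(1/s)$), but the underlying argument is the same.
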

\begin{proof}
Let $\omega(x_i)=p_i$. We have to show that $\sum p_if(p_i/a_i) \geq f(1)$. First assume that all $p_i>0$ and $\sum_i a_i=1$. Then 
\[ \sum_i p_i f(p_i/a_i) =\sum_i a_i\frac{p_i }{a_i}f(\frac{p_i }{a_i}) \geq f(\sum p_i)=f(1)\]
by convexity of $xf(x)$. The general case can be proved by starting with $a_i$ corresponding to $p_i>0$ and adding extra $a_j$'s to satisfy $\sum_i a_i=1$ if necessary. The corresponding $p_j$ is set to $0$. Now define a new function $g(x)=xf(x), \; x>0$ and $g(0)=0$. The conclusion of the lemma follows by arguing as above with $g$.   
\end{proof}
Using the lemma for the function $f(x)=\log {x}$ and Lemma \ref{lem:Kraft} we easily deduce the following. 
\begin{propn}[Noiseless coding]
Let $\mathscr{S}$ be a source with output $\cali{O}_{\omega}\in A$, a finite-dimensional $\cstar$ algebra with atomic basis $\{x_1,\dotsc, x_n\}$ (the alphabet). Let $g$ be prefix-free code such that $g(x_i)$ is a tensor product of $k_i$ members of the code basis. Then 
\[\omega(\sum_i k_ix_i+\log{\cali{O}_{\omega}})\geq 0\]
\end{propn}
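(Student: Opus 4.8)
The plan is to assemble the statement from the two preceding lemmas once the auxiliary weights $a_i$ are chosen correctly. First I would unpack the left-hand side with the analytic functional calculus for abelian $\cstar$ algebras introduced above. Writing $p_i = \omega(x_i)$, so that $\cali{O}_{\omega} = \sum_i p_i x_i$, the logarithm acts diagonally in the atomic basis, $\log \cali{O}_{\omega} = \sum_i (\log p_i) x_i$, where the sum is understood to run over the letters with $p_i > 0$. Since $\omega(x_j) = p_j$, applying $\omega$ gives $\omega(\sum_i k_i x_i + \log \cali{O}_{\omega}) = \sum_i p_i k_i + \sum_i p_i \log p_i$, so the claim is exactly the statement that the mean code length dominates the (base-$d$) entropy.

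The key step is the choice $a_i = d^{-k_i}$, where $d = \dim B'$ is the size of the code alphabet and the logarithm is taken to base $d$. Then each $a_i > 0$, and the hypothesis that $g$ is prefix-free with $g(x_i)$ a tensor product of $k_i$ basis elements, i.e.\ $g(x_i) \in \tensor^{k_i} B'$, lets me invoke the Kraft inequality (Lemma \ref{lem:Kraft}) to conclude $\sum_i a_i = \sum_i d^{-k_i} \leq 1$. Thus $\{a_i\}$ meets precisely the hypotheses required by the technical lemma above.

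Next I would apply that technical lemma with $f(x) = \log x$, after checking its hypotheses: $x f(x) = x \log x$ is convex on $(0,\infty)$, $\lim_{x\to 0} x \log x = 0$, and $f(1) = 0$. The lemma yields $\omega(\sum_i \log(p_i/a_i)\, x_i) \geq f(1) = 0$. Expanding $\log(p_i/a_i) = \log p_i - \log a_i = \log p_i + k_i$, since $\log_d a_i = -k_i$, converts this into $\omega(\sum_i (k_i + \log p_i) x_i) = \omega(\sum_i k_i x_i + \log \cali{O}_{\omega}) \geq 0$, which is the assertion.

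Since the argument is essentially a substitution into the two lemmas, I do not expect a deep obstacle; the only points needing care are (i) matching the base of the logarithm to the code-alphabet dimension $d$, so that $-\log a_i$ is exactly $k_i$ and the Kraft bound reads $\sum_i a_i \leq 1$, and (ii) the letters with $p_i = 0$, for which $\log p_i$ is undefined as an algebra element but contributes $p_i \log p_i = 0$ to the expectation. This second point is already absorbed into the technical lemma through its convention $x \log x \to 0$ as $x \to 0$, so it suffices to restrict every sum to the support $\{i : p_i > 0\}$.
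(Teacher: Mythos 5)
Your proposal is correct and follows exactly the route the paper intends: the paper's entire ``proof'' is the single remark that the proposition follows from the technical lemma with $f(x)=\log x$ together with the Kraft inequality, and you have supplied precisely the missing details (the choice $a_i=d^{-k_i}$, the verification of the lemma's hypotheses, and the expansion of $\log(p_i/a_i)$). Your two cautionary points --- matching the logarithm base to the code-alphabet dimension and handling letters with $p_i=0$ --- are genuine subtleties that the paper leaves implicit, so your write-up is if anything more careful than the original.
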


Next we give a simple application of Theorem \ref{thm:weak-law}. First define a positive functional $\tr$ on a finite dimensional abelian $\cstar$ algebra $A$ with an atomic basis $\{x_1,\dotsc, x_d\}$ by 
\(\tr =\omega_1+\dotsb+\omega_d\) where $\omega_i$ are the dual functionals. It is clear that $\tr$ is independent of the choice of atomic basis. Informally, the function $\tr$ gives the dimension of a projection.  
\begin{thm}[Asymptotic Equipartition Property (AEP)]\label{thm:AEP}
Let $\scrp{S}$ be a source with output $\cali{O}_{\omega}=\sum_{i=1}^d \omega(x_i)x_i$ where $\omega$ is a state on the finite dimensional algebra with atomic basis $\{x_i\}$. Then given $\epsilon >0$ there is a positive integer $n_0$ such that for all $n>n_0$ 
\[ P(2^{n(H(\omega)-\epsilon)}\leq \tensor^n \cali{O}_{\omega}\leq 2^{n(H(\omega)+\epsilon)}) > 1- \epsilon \]
where $H=\omega(\log_2(\cali{O}_{\omega}))$ is the {\em entropy} of the source and the probability distribution is calculated with respect to the state $\Omega_n=\omega\tensor\dotsm\tensor\omega$ ($n$ factors) of $\tensor^n A$. If $Q$ denotes the identity in the subalgebra generated by $(\epsilon I-|\log_2(\tensor^n\cali{O}_\omega)+nH|)_+$ then 
\[(1-\epsilon)2^{n(H(\omega)-\epsilon)}\leq \tr(Q) \leq 2^{n(H(\omega)+\epsilon)} \]
\end{thm}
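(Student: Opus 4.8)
The plan is to see the AEP as nothing more than the weak law of large numbers (Theorem \ref{thm:weak-law}) applied to a single self-adjoint element, namely $\log_2(\tensor^n\cali{O}_\omega)$, after decomposing it into a sum of independent, identically distributed pieces. First I would put $\cali{O}_\omega$ in spectral form $\cali{O}_\omega=\sum_i p_i x_i$ with $p_i=\omega(x_i)$, and use the analytic functional calculus of the previous section. Since $\tensor^n\cali{O}_\omega=\sum_{i_1,\dots,i_n} p_{i_1}\cdots p_{i_n}\,(x_{i_1}\tensor\cdots\tensor x_{i_n})$, taking $\log_2$ through the spectral decomposition gives
\[
\log_2(\tensor^n\cali{O}_\omega)=\sum_{k=1}^n L_k,\qquad L_k=\unit\tensor\cdots\tensor\log_2(\cali{O}_\omega)\tensor\cdots\tensor\unit ,
\]
where $\log_2(\cali{O}_\omega)=\sum_i(\log_2 p_i)x_i$ occupies the $k$th tensor slot. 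If some $p_i=0$ one first restricts the alphabet $X$ to its support, on which $\log_2(\cali{O}_\omega)$ is a genuine bounded self-adjoint element with finite spectrum, hence all of its moments are finite.

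Next I would verify the hypotheses of Theorem \ref{thm:weak-law} for the family $\{L_1,\dots,L_n\}$. Because $\Omega_n=\omega\tensor\cdots\tensor\omega$ is a product state and each $L_k$ is supported in a distinct tensor factor, the $L_k$ are $\Omega_n$-independent in the sense of Theorem \ref{thm:structIndep}; moreover they are all the ``same'' element in different slots, so $\Omega_n(L_k^{\,j})$ is independent of $k$ for every $j$, i.e.\ they are identically distributed, with common mean $\mu=\omega(\log_2\cali{O}_\omega)=H$. Setting $s_n=\frac1n\sum_{k=1}^n L_k=\frac1n\log_2(\tensor^n\cali{O}_\omega)$, these are exactly the conditions of Theorem \ref{thm:weak-law} and its Corollary \ref{cor:prob_bounds}, so for every $\epsilon>0$ there is $n_0$ with $P(|s_n-\mu|>\epsilon)<\epsilon$ for all $n>n_0$.

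I would then translate this concentration statement into the two displayed bounds. The complementary projection $Q=\unit-\mathbb{P}(|s_n-\mu|>\epsilon)$ has $\Omega_n(Q)=1-P(|s_n-\mu|>\epsilon)>1-\epsilon$, and since $u\mapsto 2^u$ is continuous and monotone, the band $|s_n-\mu|\le\epsilon$ becomes, under the functional calculus, the operator inequality $2^{n(H-\epsilon)}\le \tensor^n\cali{O}_\omega\le 2^{n(H+\epsilon)}$ on the range of $Q$; this is the first assertion. Here $Q$ is concretely the spectral projection $\sum x_{i_1}\tensor\cdots\tensor x_{i_n}$ taken over the ``typical'' indices for which $p_{i_1}\cdots p_{i_n}$ lies in the stated band, which is exactly the identity of the subalgebra generated by the positive part appearing in the theorem.

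Finally, for the trace estimate I would use that $\tr=\tr\tensor\cdots\tensor\tr$ assigns $1$ to every atomic basis element of $\tensor^n A$, so $\tr(Q)$ simply counts the typical indices. Writing the typical weight $\Omega_n(Q)=\sum_{\mathrm{typ}}p_{i_1}\cdots p_{i_n}$ and squeezing each summand between the two bounds above, the estimate $\Omega_n(Q)\le 1$ yields the upper bound on $\tr(Q)$, while $\Omega_n(Q)>1-\epsilon$ yields the lower bound, giving $(1-\epsilon)2^{n(H-\epsilon)}\le\tr(Q)\le 2^{n(H+\epsilon)}$. The conceptual core, decomposing $\log_2(\tensor^n\cali{O}_\omega)$ into an i.i.d.\ sum, is short; the step I expect to require the most care is the bookkeeping that identifies the abstract approximate identity $Q$ with the concrete spectral projection over typical strings, so that both the operator band inequality and the counting identity $\tr(Q)=|\{\text{typical indices}\}|$ hold simultaneously, together with keeping the sign convention for $H$ consistent between the probability statement and the trace statement.
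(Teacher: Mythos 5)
Your proposal follows essentially the same route as the paper: write $\tensor^n\cali{O}_\omega$ as a product of $n$ commuting single-slot factors, take $\log_2$ to obtain a sum of $\Omega_n$-independent, identically distributed elements with mean $H$, and invoke Corollary \ref{cor:prob_bounds} of the weak law of large numbers. In fact you go further than the paper's own (very terse) proof, which stops after stating the concentration bound: your translation of the band $|s_n-\mu|\le\epsilon$ into the operator inequality via the spectral projection over typical strings, and the counting argument giving $(1-\epsilon)2^{n(H-\epsilon)}\le\tr(Q)\le 2^{n(H+\epsilon)}$, supply exactly the steps the paper omits, and your closing remark about the sign convention for $H$ correctly flags a genuine inconsistency between the two displayed formulas in the theorem statement.
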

Before proving the theorem some explanations are necessary. First $\log_2{x}\;(=\ln{x}/\ln{2})$ is usually defined for strictly positive elements of a $\cstar$ algebra\footnote{Henceforth $\log$ will be always with respect to base 2 unless specified otherwise}. We extend the definition to all non-zero $x\geq 0$. The standard method of extending complex functions (continuous or analytic) functions to a $\cstar$ algebra is via functional calculus \cite{KR1}. However, in our case it is simpler. Let $\{y_i\}$ be a  atomic basis in an abelian $\cstar$ algebra. 
Let $y=\sum_i a_iy_i$ with $a_i\geq 0$. Then define $\log_2{y}=\sum_i b_i y_i$ where $b_i=\log{a_i}$ if $a_i>0$ and 0 otherwise. This definition implies that some standard properties of $\log$ are no longer true (e.g.\ $2^{\log{x}}\neq x$). But in the present context it gives the correct result when we take expectation values as in the formulas in the theorem. A somewhat longer but mathematically better justified route is to ``renormalize'' the state. Thus if $\omega(x_i)=0$ for $k$ indices we define $\omega'(x_i)=\delta$ where $\delta$ is arbitrarily small but positive and $\omega'(x_j)=\omega(x_j)-k\delta$ where $\omega'(x_j)>k\delta$. If we can prove the theorem now for $\omega'$ and since the relations are valid in the limit $\delta\rightarrow 0$ then we are done. We will not take this path but implicitly assume that the probabilities are positive. Finally, note that the element $Q$ is a projection on the subalgebra generated by $(\epsilon I-|\log_2(\tensor^n\cali{O}_\omega)-nH|)_+$. It corresponds to the set of strings whose probabilities are between $2^{-nH-\epsilon}$ and $2^{-nH+\epsilon}$. The integer $\tr(Q)$ is simply the cardinality of this set. 
\begin{proof}[Proof of the theorem]
First note that $\log{ab}=\log a+\log b$ for elements $a,b\geq 0$ in $A$. We can write $\tensor^n\cali{O}_{\omega}=X_1X_2\dotsb X_n$ where $X_i=\unit\tensor\unit\tensor\dotsm\tensor\cali{O}_{\omega}\tensor\unit\tensor\dotsm\tensor\unit$ with $\log{\cali{O}_{\omega}}$ in the $i$th place. The fact that $\Omega_n$ is a product state on $\tensor^n A$ (corresponding to a source whose successive outputs are independent) implies that $X_i$ are independent and identically distributed. We can now apply the corollary to Theorem \ref{thm:weak-law} yielding 
\(P(|\log{(\tensor^n\cali{O}_{\omega})} -\Omega_n(\log{X_1})|>\epsilon)=P(|\log{(\tensor^n\cali{O}_{\omega})} -\omega(\log{(\cali{O}_{\omega})})|>\epsilon)\). 
\end{proof}

\subsection{Communication Channels}
Every form of communication requires channels through which signals are sent and received. It is perhaps the most important component in the mathematical models of communication. We will not be dealing with real channels which are complex physical objects--- the atmosphere, a telephone cable, a bus on the mainboard of a computer are some examples. Our object is to give simple mathematical models of a channel which still yield interesting results relevant for concrete channels. The original paper of Shannon characterized channels by a transition probability function. Thus, the channel (precisely a two-way channel) has an input alphabet $X$ and output alphabet $Y$ and a sequence of random functions $\phi_n: X^n\rightarrow Y^n$. The latter are characterized by probability distributions $p_n(y^{(n)}|x^{(n)})$, the interpretation being: $\phi_n(x^{(n)})=y^{(n)}$ with conditional probability  $p_n(y^{(n)}|x^{(n)})$. Note that the distribution depends on the entire history. We say that such a channel has (infinite) memory. A channel has finite memory if there is an integer $k\geq 0$ such that if $x^{(n)}=x_nx_{n-1}\cdots x_{n-k+1}\dotsc x_1$ then $p_n(y^{(n)}|x^{(n)})= p_n(y^{(n)}|x'^{(n)})$ for any string $x_n'$ of length $n$ such that $x_n'=x_n, \dotsc, x_{n-k+1}'=x_{n-k+1}$. That is, the probability distribution depends on the most recent $k$ symbols seen by the channel. A channel is {\em memoryless} if $k=1$. Since we will be dealing mostly with discrete memoryless channels (DMS) this property will be tacitly assumed unless stated otherwise. In the memoryless case it is easy to show the simple form of transition probabilities 
\beq
p_n(y^{(n)}|x^{(n)})=p_n(y_1\dotsc y_n|x_1\dotsc x_n)=p(y_1|x_1)p(y_2|x_2)\dotsb p(y_n|x_n) 
\eeq
This motivates us to define the {\em channel transformation matrix} $C(y_j|x_i)$ with $y_j\in Y$ and $x_i\in X$. As before in this work $X$ and $Y$ will be finite sets. Since the matrix $C(y_j|x_i)$ is supposed to represent the probability that the channel outputs $y_j$ on input $x_i$ we must have $\sum_j C(y_j|x_i)=1$ for all $i$. In other words, matrix $C(ij)=C(y_j|x_i)$ is {\em row stochastic}. This is the standard formulation.  \cite{Ash,CoverT,Khinchin}\footnote{In this work we will not deal with channel coding and decoding. Including these concepts is not difficult but complicates the notation.} We now turn to the algebraic formulation. We restrict ourselves to two-terminal channels here. 
\begin{Def}
A DMS channel $\cali{C}=\{X,Y,C\}$ where $X$ and $Y$ are abelian $\cstar$ algebras of dimension $m$ and $n$ respectively and $C: Y \rightarrow X$ is a unital positive map. The algebras $X$ and $Y$ will be called the input and output algebras of the channel respectively. Given a state $\omega$ on $X$ we say that $(X,\omega)$ is the input source for the channel. 
\end{Def}
We recall that a positive map $C:Y\rightarrow X$ is a linear map such that $C(y)\geq 0$ if $y\geq 0$. Sometimes we write the entries of $C$ in the more suggestive form $C_{ij}=C(y_j|x_i)$ where $\{y_j\}$ and $\{x_i\}$ are atomic bases for $Y$ and $X$ respectively. Thus $C(y_j)=\sum_i C_{ij}x_i= \sum_i C(y_j|x_i)x_i$. Note that in our notation $C$ is an $m\times n$ matrix. Its transpose $C^T_{ji}=C(y_j|x_i)$ is the channel matrix in the standard formulation. We have to deal with the transpose because the channel is a map {\em from} the output alphabet to the input alphabet. This may be counterintuitive but observe that any map $Y\rightarrow X$ defines a unique dual map $\cali{S}(X)\rightarrow \cali{S}(Y)$, on the respective state spaces. Informally, a channel transforms a probability distribution on the input alphabet to a distribution on the output. In other words, given an input source there is a unique output source determined by the channel. Let us note that in case of abelian algebras every positive map is guaranteed to be {\em completely positive} \cite{Tak1}. This is no longer true in the non-abelian case. Hence for the quantum case completely positivity has to be explicitly imposed on (quantum) channels. 

We characterize a channel by input/output algebras (of observables) and a positive map. Like the source output we now define a useful quantity called {\em channel output}. Corresponding to the atomic basis $\{y_i\}$ of $Y$ let $\tensor^k y_{i(k)}$ be an atomic basis in $\tensor^n Y$. Here $i(k)=(i_1i_2\dotsc i_k)$ is a multi-index. Similarly we have an atomic basis $\{\tensor^k x_{j(k)}\}$ for $\tensor^k X$. The level-$k$ channel output is defined to be. 
\beq\label{eq:chOutput}
O^k_C = \sum_{i(k)} y_{i(k)}\tensor C^{(k)}(y_{i(k)})  
\eeq
Here $C^{(k)}$ represents the channel transition probability matrix on the $k$-fold tensor product corresponding to strings of length $k$. In the DMS case it is simply the $k$-fold tensor product of the matrix $C$. The channel output defined here encodes most important features of the communication process. First, given the input source function \footnote{We called this the source output before. But as the channel has two terminals we call it  input source function to avoid confusion.} \(\cali{I}_{\omega^k}=\sum_i\omega^k(x_{i(k)})x_{i(k)}\) the output source function is defined by 
\beq \label{eq:chOutput2}
\cali{O}_{\tilde{\omega}^k} = I\tensor \tr_{\tensor^kX}((\unit \tensor \cali{I}_{\omega^k})O^k_c)=\sum_i \sum_j C(y_{i(k)}|x_{j(k)})\omega^k(x_{j(k)})y_{i(k)}
\eeq
Here, the state $\tilde{\omega}^k$ on the output space $\tensor^k Y$ can be obtained via the dual $\tilde{\omega}^k(y)=\tilde{C}^k(\omega^k)(y)=\omega^k(C^k(y))$. The formula above is an alternative representation which is very similar to the quantum case. The {\em joint output} of the channel can be considered as the combined output of the two terminals of the channel. This is obtained by {\em not} tracing out over the input in the equation \ref{eq:chOutput2}. Thus the joint output
\beq \label{eq:chOutputJoint}
\begin{split}
&\cali{J}_{\tilde{\Omega}^k} = (\unit \tensor \cali{I}_{\omega^k})O^k_C=\sum_{ij} \Omega^k(y_{i(k)}\tensor x_{j(k)})y_{i(k)}\tensor x_{j(k)} \text{ with } \\
& \Omega^k(y_{i(k)}\tensor x_{j(k)})= C(y_{i(k)}|x_{j(k)})\omega(x_{j(k)})
\end{split}
\eeq
Let us analyze the algebraic definition of channel given above. For simplicity of notation, we restrict ourselves to level 1. The explicit representation of channel output is 
\[\sum_i y_i\tensor \sum_j C(y_i|x_j)x_j \]
We interpreted this as follows: if on the channel out-terminal $y_i$ is observed then the input could be $x_j$ with probability \(C(y_i|x_j)\omega(x_j)/\sum_jC(y_i|x_j)\omega(x_j)\). Now suppose that for a fixed $i$  $C(y_i|x_j)=0$ for all $j$ except one say, $j_i$. Then on observing $y_i$ at the output we are certain that the the input is $x_{j_i}$. If this is true for all values of $y$ then we have an instance of a lossless channel. It is easy to write the channel matrix  in this case. Thus, given $1\leq j\leq n$ let $d_j$ be the set of integers $i$ for which $C(y_i|x_j)> 0$. The lossless property implies that $\{d_j\}$ form a partition of the set $\{1,\dotsc, m\}$. The corresponding channel output is 
\[ O_C= \sum_j \Bigl(\sum_{i\in d_j} C(y_i|x_j)y_i\Bigr)\tensor x_j\]
Clearly lossless channels are the most useful for communication of information. At the other extreme is the {\em useless} channel in which there is no correlation between the input and the output. 
To define it formally, consider a channel $\cali{C}=\{X,Y,C\}$ as above. The map $C$ induces a map $C': Y\tensor X\rightarrow X$ defined by $C'(y\tensor x)=xC(y)$. Given a state $\omega$ on $X$ the 
dual of the map $C'$ defines a state $\Omega_C$ on $Y\tensor X$: \(\Omega_C(y\tensor x)=\omega(C'(y\tensor x))=C(y|x)\omega(x)\). We call $\Omega_C$ the joint (input-output) state of the channel. A channel is 
useless if $Y$ and $X$ (identified as $Y\tensor \unit$ and $\unit \tensor X$ resp.) are $\Omega_C$-independent. 
\begin{lem}
A channel $\cali{C}=\{X,Y,C\}$ with input source $(X,\omega)$ is useless iff the matrix $C_{ij}=C(y_j|x_i)$ is of rank 1. 
\end{lem}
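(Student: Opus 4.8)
The plan is to reduce everything to the atomic bases $\{x_i\}$ of $X$ and $\{y_j\}$ of $Y$ and to exploit the bilinearity of $\Omega_C$. The subalgebras $A(Y\tensor\unit)$ and $A(\unit\tensor X)$ are exactly the spans of the $y_j\tensor\unit$ and the $\unit\tensor x_i$ (these sets are already closed under multiplication, since $(y_j\tensor\unit)(y_k\tensor\unit)=\delta_{jk}y_j\tensor\unit$, etc.), so by linearity of $\Omega_C$ the independence condition of Theorem \ref{thm:structIndep} holds for all pairs in these subalgebras if and only if it holds on basis elements, i.e.\ iff
\[ \Omega_C\big((y_j\tensor\unit)(\unit\tensor x_i)\big)=\Omega_C(y_j\tensor\unit)\,\Omega_C(\unit\tensor x_i)\quad\text{for all }i,j. \]
First I would record the relevant values of $\Omega_C$. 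Using $(y_j\tensor\unit)(\unit\tensor x_i)=y_j\tensor x_i$ and the definition $\Omega_C(y_j\tensor x_i)=C(y_j|x_i)\omega(x_i)$ gives the left-hand side, while $\unit=\sum_i x_i=\sum_j y_j$ and linearity give the marginals $\Omega_C(y_j\tensor\unit)=\sum_i C(y_j|x_i)\omega(x_i)$, which I call $q_j$, and $\Omega_C(\unit\tensor x_i)=\omega(x_i)\sum_j C(y_j|x_i)$.

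Next I would bring in the two normalisation facts. Unitality of $C$, namely $C(\unit)=\sum_j C(y_j)=\unit$, forces $\sum_j C(y_j|x_i)=1$ for every $i$ (the matrix is row stochastic), so $\Omega_C(\unit\tensor x_i)=\omega(x_i)$; and $\omega$ being a state gives $\sum_i\omega(x_i)=1$. With these, the independence equation collapses to
\[ C(y_j|x_i)\,\omega(x_i)=q_j\,\omega(x_i)\quad\text{for all }i,j. \]
For the forward direction, on the support of $\omega$ (where $\omega(x_i)>0$) this yields $C(y_j|x_i)=q_j$, independent of $i$; thus every row of the matrix $C_{ij}=C(y_j|x_i)$ equals the common vector $(q_j)_j$, so $C$ has rank $1$. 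For the converse, if $C$ has rank $1$ then its rows are proportional, and row stochasticity forces the proportionality constants to be equal, so all rows coincide with a single stochastic vector $(q_j)$; substituting $C(y_j|x_i)=q_j$ back verifies the displayed equation (using $\sum_i\omega(x_i)=1$ to confirm $q_j=\Omega_C(y_j\tensor\unit)$), establishing $\Omega_C$-independence.

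The one genuine subtlety, and the step I expect to require care, is the treatment of input letters with $\omega(x_i)=0$: there the factor $\omega(x_i)$ annihilates the corresponding entries of $C$ inside $\Omega_C$, so independence constrains only the rows indexed by the support of $\omega$, and ``rank $1$'' can fail off the support. Consistent with the convention adopted in the proof of Theorem \ref{thm:AEP}, I would assume the input state has full support, $\omega(x_i)>0$ for all $i$; alternatively one restricts $X$ to the subalgebra generated by the support, on which the equivalence holds verbatim. Everything else is the routine linear algebra of rank-one stochastic matrices sketched above.
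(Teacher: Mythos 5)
Your proposal is correct and follows essentially the same route as the paper: compute the marginals $\Omega_C(\unit\tensor x_i)=\omega(x_i)$ and $\Omega_C(y_j\tensor\unit)=\tilde\omega(y_j)$, reduce independence to $C(y_j|x_i)\omega(x_i)=\omega(x_i)\tilde\omega(y_j)$, and use row stochasticity for the converse. Even your handling of the $\omega(x_i)=0$ case matches the paper's convention of discarding (or assuming away) zero-probability letters.
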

\begin{proof}
Suppose $\cali{C}$ is useless. Note that $\Omega_C(\unit\tensor x)=\omega(x)$ and $\Omega_C(y \tensor \unit)=\tilde{\omega}(y)$ where $\tilde{\omega}(y)=\omega(C(y))$ is the image of 
$\omega$ under the dual of the map $C$. Then $\Omega_C$ independence implies $C(y_j|x_i)\omega(x_i)=\omega(x_i)\tilde{\omega}(y_j)$. We may assume that all $\omega(x_i)>0$ (otherwise we just discard it). 
Hence, $C(y_j|x_i)=\tilde{\omega}(y_j)$ and this proves necessity. Now if $C_{ij}$ has rank 1 then all the rows are non-zero multiples of any one row, say the first. Since $C$ is a row stochastic matrix 
the rows must be identical, that is, $C_{ij}=a_j=\tilde{\omega}(y_j)$ and independence is trivially verified. 
\end{proof}
The definition of a useless channel captures the intuition that if there is no correlation between the input and output then we can recover practically nothing. 
The {\em channel coding} theorem asserts that apart from this extreme case we can decode the output to recover a large portion of the input with high probability of success.  The algebraic version of the channel coding theorem assures that it is possible to approximate, in the long run, an arbitrary channel (excepting the useless case) by a lossless one. 

\begin{thm}[Channel coding]\label{thm:ch_coding}
Let $\cali{C}$ be a channel with input algebra $X$ and output algebra $Y$. Let \(\{x_i\}_{i=1}^n\text{ and } \{y_j\}_{j=1}^m\) be atomic bases for $X$ and $Y$ resp. Given a state $\omega$ on $X$, if the channel is not useless then  for each $k$ there are subalgebras \(Y_k\subset \tensor^k Y, X_k\subset \tensor^k X\), a map $C_k: Y_k\rightarrow X_k$ induced by $C$ and  a lossless channel $L_k: Y_k\rightarrow X_k$ such that 
\[\lim_{k\rightarrow \infty} \Omega(|O_{C_k}-O_{L_k}|) = 0 \text{ on } T_k=Y_k\tensor X_k\]
Here $\Omega=\tensor^{\infty}\Omega_C$ and on $\tensor^k Y\tensor \tensor^k Y$ it acts as $\Omega^k=\tensor^k\Omega_C$ where $\Omega_C$ is the state induced by the channel and a given input state $\omega$. Moreover, if $r_k=\text{dim}(X_k)$ then $R=\frac{\log{r_k}}{k}$, called transmission rate, is independent of $k$. 
\end{thm}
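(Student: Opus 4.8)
The plan is to carry Shannon's typicality-and-packing argument into the algebraic setting, treating the joint state $\Omega_C$ on $Y\tensor X$ as the single object to which the Asymptotic Equipartition Property (Theorem~\ref{thm:AEP}) is applied. First I would form the joint algebra $Y\tensor X$ carrying the state $\Omega_C$, whose $k$-fold power is $\Omega^k=\tensor^k\Omega_C$, and invoke Theorem~\ref{thm:AEP} three times: for $\Omega_C$, for its marginal $\omega$ on $X$, and for its marginal $\tilde\omega$ on $Y$. This yields typical projections in $\tensor^k(Y\tensor X)$, $\tensor^k X$ and $\tensor^k Y$ whose traces grow like $2^{k S_{XY}}$, $2^{k S_X}$ and $2^{k S_Y}$ --- the entropies of $\Omega_C,\omega,\tilde\omega$ that govern typical-set cardinality in Theorem~\ref{thm:AEP} --- each carrying $\Omega$-measure tending to $1$. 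Attached to a typical input word $x_{j(k)}$ I would then form its \emph{conditionally typical} output projection, of trace $\approx 2^{k S_{Y|X}}$ with $S_{Y|X}=S_{XY}-S_X$; this projection is the algebraic decoding region of the codeword $x_{j(k)}$.

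The second step is the code construction. Because $\cali{C}$ is not useless, the preceding rank-one lemma shows that $X$ and $Y$ are $\Omega_C$-correlated, so the mutual information $I:=S_X+S_Y-S_{XY}=S_Y-S_{Y|X}$ is strictly positive and there is room to pack decoding regions. I would choose a set $J_k$ of input words whose conditionally typical output projections are mutually orthogonal, i.e.\ whose decoding regions are disjoint; since each such region occupies trace $\approx 2^{k S_{Y|X}}$ inside a typical output space of trace $\approx 2^{k S_Y}$, a counting/averaging argument produces $|J_k|\approx 2^{kI}$ admissible words. With this data I set $X_k$ to be the subalgebra of $\tensor^k X$ generated by $\{x_{j(k)}:j\in J_k\}$ together with the complementary atom, $Y_k$ the subalgebra of $\tensor^k Y$ generated by the disjoint decoding projections, and let $C_k:Y_k\rightarrow X_k$ be the compression of $C^{(k)}$ to these subalgebras.

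The comparison channel $L_k:Y_k\rightarrow X_k$ is the \emph{lossless} channel sending each decoding region deterministically to its codeword, so that $O_{L_k}$ has exactly the block/partition form of the lossless output displayed just before the theorem. To estimate $\Omega(|O_{C_k}-O_{L_k}|)$ I would pass to the common atomic basis of the abelian algebra $T_k=Y_k\tensor X_k$: there the two outputs agree except on basis elements recording a decoding error --- an output leaving the decoding region of the true input, or landing in an overlapping one. Weighting these coefficient differences by $\Omega^k$ bounds the whole quantity by the $\Omega^k$-probability of exiting the conditionally typical set, which by the weak law of large numbers (Theorem~\ref{thm:weak-law}) and Corollary~\ref{cor:prob_bounds} tends to $0$; hence $\Omega(|O_{C_k}-O_{L_k}|)\rightarrow 0$. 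Finally $r_k=\dim X_k\approx|J_k|$ gives $\tfrac1k\log r_k\rightarrow I=S_X+S_Y-S_{XY}$, a number built only from $C$ and $\omega$ and therefore independent of $k$.

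The step I expect to be the main obstacle is the packing argument delivering $|J_k|\approx 2^{kI}$ codewords with pairwise orthogonal decoding projections: this is precisely the content of Shannon's random-coding theorem, and the delicate point is to run the existence-of-a-good-code argument intrinsically, in terms of traces and orthogonality of projections, rather than by appeal to an external probability space. A secondary care point is to check that the compressions $C_k$ and $L_k$ are genuine unital positive maps between the constructed subalgebras, and that the coefficient-wise bound on $\Omega(|O_{C_k}-O_{L_k}|)$ is legitimate for the $\cstar$-modulus $|\cdot|=\sqrt{(\cdot)^*(\cdot)}$.
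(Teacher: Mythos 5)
Your outline follows the classical typicality-and-packing route, which is genuinely different from the proof in the paper, and it stalls exactly where you predict it would. The step ``a counting/averaging argument produces $|J_k|\approx 2^{kI}$ admissible words'' with pairwise orthogonal decoding projections is not a routine count: comparing the trace $\approx 2^{kS_{Y|X}}$ of a conditionally typical region with the trace $\approx 2^{kS_Y}$ of the typical output space only shows that at most about $2^{kI}$ \emph{disjoint} regions could fit, which is the converse direction. To show that roughly this many codewords with essentially disjoint decoding regions actually \emph{exist}, one needs either a random-coding average over code choices or a greedy maximal-code (Feinstein-type) construction, together with a union bound showing that the pairwise overlaps carry vanishing $\Omega^k$-measure; none of that is supplied, and it is the entire content of the direct coding theorem. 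A second missing ingredient is conditional typicality itself: Theorem~\ref{thm:AEP} as proved in the paper covers only unconditional product states, so the conditionally typical projections attached to each input word, and their trace and measure estimates, would have to be constructed and proved from scratch in the algebraic setting before the packing argument could even be stated.

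The paper sidesteps both difficulties by a different decomposition. It fixes an \emph{arbitrary} set $A_k$ of $r_k=2^{kR}$ input atoms, takes $L_k$ to be the atom-by-atom maximum-likelihood decoder $L_k(y_i)=\tilde{\omega}(y_i)x_{i_r}$, and introduces the projection $Z_k$ onto the ideal generated by $(\log(O_{C_k}O_{\tilde{\omega}^k}^{-1})-k(R+\epsilon))_+$, i.e.\ the event that the information density exceeds $R+\epsilon$. On $Z_k$ the error $\Omega(Z_k|O_{C_k}-O_{L_k}|)$ is controlled by a pure counting bound of order $r_k2^{-k(R+\epsilon)}=2^{-k\epsilon}$; on $\unit_k-Z_k$ it is dominated by $\Omega(\unit_k-Z_k)$, which tends to $0$ by the weak law of large numbers (Theorem~\ref{thm:weak-law} and Corollary~\ref{cor:prob_bounds}) applied to the i.i.d.\ sum $\log(\tensor^k O_C O_{\tilde{\omega}}^{-1})/k$, provided $R<I-2\epsilon$ where $I=\Omega(\log O_C-\log O_{\tilde{\omega}})$. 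This information-spectrum/threshold argument needs neither conditional typicality nor packing, only the limit theorems already established in Section~\ref{sec:prob}. To salvage your route you would have to algebraize the random-coding union bound in terms of traces and projections; otherwise the paper's decomposition is the shorter path through the same statement.
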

First let us clarify the meaning of the above statements. The theorem simply states that on the chosen set of codewords the channel output of $C_k$ induced by the given channel can be made arbitrarily close to that of a lossless channel $L_k$. Since a lossless channel has a definite decision scheme for decoding the choice of $L_k$ is effectively a decision scheme for decoding the original channel's output when the input is restricted to our ``code-book''. This in turn implies that the probability of error tends to 0. 
\begin{proof}
From an atomic basis of $\tensor^k X$ choose a subset $A_k$ of cardinality $r_k$ (to be determined). Let $X_k$ be the subalgebra generated by $A_k$. Write $C^{(k)}$ for the $k$-fold tensor product of $C$. Let $Q_k$ be the identity on $X_k$ (it is the sum of all the members of $A_k$). For an atomic basis $B_k$ of $\tensor^k Y$ let $B'_k$ be the subset such that $C^{(k)}(y)Q_k\neq 0\text{ for } y\in B'_k$. Let $Y_k$ be the subalgebra generated 
by $B'_k$ and $C_k: Y_k\rightarrow X_k$ denote the linear map $C_k(y)=Q_kC^{(k)}(y)$. Informally, if we restrict the messages to observables in $A_k$ then the output algebra is $Y_k$. The new channel map is $C_k$. We now have a new channel $\tilde{\cali{C}}^k=(X_k,Y_k,C_k)$. Throughout the rest of the proof we will assume that we are working in $T_k$ with the appropriate maps. We next define $L_k$ as follows. For $y_i\in B'_k$ let \(C_k(y_i)=\sum_j C_k(y_i|x_j)x_j,\:x_j\in A_k\). Let $C_k(y_i|x_{i_r})$ be the maximum of $C_k(y_i|x_j)$ for fixed $y_i$ (if there are more than one index equal to this maximum choose one arbitrarily). Let $L_k(y_i)=\tilde{\omega}(y_i) x_{i_r}$. The map $L_k$ is {\em not} unital. Strictly speaking $L_k$ is not a channel map as we have defined above. However, as we see below, $L_k$ does approximate $O_{C_k}$ in $T_k$ 
with small error. What this means is that with high probability we can correctly associate a unique and correct input to a given channel output \footnote{We have combined two types of decoding scheme: the ideal observer decoding \cite{Ash} and typical set decoding \cite{CoverT}}. The non-unital property of $L_k$ is  reflective of the situation in which some of the original messages {\em outside} of $X_k$ may end up in $Y_k$. Set 
$r_k=2^{kR}$ and let 
\[O_{\tilde{\omega}^k}=\sum_{y\in B_k'}\tilde{\omega}(y)(y\tensor \unit) \text{ and } O_{\omega^k}=\sum_{x\in A_k} \omega(x)(\unit\tensor x)\]
Here $O_{\tilde{\omega}^k}$ and $O_{\omega^k}$ are respectively the input and output source function for the channel  $\tilde{\cali{C}}^k$.  
Let $Z_k$ be the identity on the ideal generated by \((\log{O_{C_k}}- \log{O_{\tilde{\omega}^k}}-k(R+\epsilon))_+=(\log{(O_{C_k}O_{\tilde{\omega}^k}^{-1})}-k(R+\epsilon))_+, \: \epsilon > 0\) in $T_k$. \footnote{This ideal is $T_k(\log{(O_{C_k}O_{\tilde{\omega}^k}^{-1})}-k(R+\epsilon))_+$. Note that we write the scalar $k(R+\epsilon)$ instead of the more accurate $k(R+\epsilon)\unit_k$ where $\unit_k$ is the unit in $T_k$.}. Note that  $O_{C_k}=O_{\Omega^k}O_{\omega^k}^{-1}$ on $T_k$. Since 
\[
Z_k|O_{C_k}O_{\tilde{\omega}^k}^{-1} -2^{k(R+\epsilon)}|=(O_{C_k}O^{-1}_{\tilde{\omega}^k} -2^{k(R+\epsilon)})_+=Z_k(O_{\Omega^k}O_{\omega^k}^{-1}O^{-1}_{\tilde{\omega}^k} -2^{k(R+\epsilon)})\geq 0
\]
and $Z_k^2=Z_k$ we conclude that \(Z_kO_{\omega^k}\leq Z_kO_{\Omega^k}O_{\tilde{\omega}^k}^{-1}2^{-k(R+\epsilon)}\leq Z_k2^{-k(R+\epsilon)}\). The last inequality follows from the fact that $O_{\Omega^k}O_{\tilde{\omega}^k}^{-1}\leq \unit$. We also have $O_{L_k}\leq O_{C_k}$ and $\Omega^k(Z_k)=\tr(Z_kO_{\Omega^k})$. The last fact is true for any projection as can be verified using an atomic basis. We now have 
\[
\begin{split}
&\Omega(Z_k|O_{C_k}-O_{L_k}|)=\Omega^k(Z_k(O_{C_k}-O_{L_k})) \leq \Omega^k(Z_k) =\tr(Z_kO_{\Omega^k})\\
&\leq \tr(Z_kO_{\omega})\leq 2^{-k(R+\epsilon)}\tr(Z_k)\leq 2^{-k(R+\epsilon)}r_k=2^{-k\epsilon}
\end{split}
\]
Hence \(\Omega(Z_k|O_{C_k}-O_{L_k}|)=\Omega^k(Z_k(O_{C_k}-O_{L_k}))\rightarrow 0\) as $k\rightarrow \infty$. To complete the proof we look at the complementary part: \((\unit_k -Z_k)|O_{C_k}-O_{L_k}|\) where $\unit_k$ is the identity in $T_k$. Consider the projection $\unit_k -Z_k$. $Z_k$ is the identity in the annihilating ideal of $F_{k-}$ where $F_k=(\log{O_{C_k}}- \log{O_{\tilde{\omega}^k}}-k(R+\epsilon))$. Let $G_k=(\log{(\tensor^kO_CO_{\tilde{\omega}}^{-1})}-k(R+\epsilon)\unit)$. Then since $F_k$ is the restriction of $G_k$ to a subspace $G_k=F_k+F'_k$ there is an \(F_k'\in \tensor^k Y\tensor\tensor^kX\) with $F_kF'_k=0$ (we use the fact the channel is memoryless). Hence taking an approximating polynomial sequence $G_{k-}=F_{k-}+F'_{k-}$. It follows that  $F_{k-}\leq G_{k-}$ and $Z'_k$, the identity on the annihilating ideal of $G_{k-}$ satisfies $Z'_k\leq Z_k$.  
This implies $\Omega(\unit_k -Z_k)\leq \Omega(\unit-Z'_k)$. By definition 
\(\Omega(\unit-Z_k) =P(\log{\tensor^kO_CO^{-1}_{\tilde{\omega}}}/k- (R+\epsilon)<0)\)
is the probability that $G_k <R+\epsilon$. But 
\[\Omega(|(\log{\tensor^kO_CO_{\tilde{\omega}}})/k -\Omega(\log{O_C}-\log{O_{\tilde{\omega}}})\unit|)\rightarrow 0 \text{ as }k\rightarrow \infty\]
 follows from the law of large numbers (see Theorem \ref{thm:weak-law} and its corollary). The quantity $I(X,Y)= \Omega(\log{O_C}-\log{O_{\tilde{\omega}}})=H(Y)-H(Y|X)$ is defined as the {\em mutual information} between the input and output algebras and $H(Y|X)$ is the conditional entropy. Thus if we have $R<I(X,Y)$, say $R\leq  I(X,Y)-2\epsilon$ then \(\Omega(\unit-Z'_k)=P(\log{\tensor^kO_CO^{-1}_{\tilde{\omega}^k}}/k- (R+\epsilon)<0)\leq P(|\log{(\tensor^kO_CO^{-1}_{\tilde{\omega}})  }/k- \unit|>\epsilon)\) but the latter $\rightarrow 0$. Putting it all together we have for any $\epsilon >0$ and $R<I-2\epsilon$
 \[
 \begin{split}
 & \Omega((\unit_k -Z_k)|O_{C_k}-O_{L_k}|)\leq \Omega(\unit_k -Z_k)\leq \Omega(\unit -Z'_k)\\
 &= P(|\log{(\tensor^kO_CO^{-1}_{\tilde{\omega}})  }/k- \unit|>\epsilon) \rightarrow 0 \text{ as } k\rightarrow \infty \\
 \end{split}
 \]
 As we already have \( \Omega(Z_k|O_{C_k}-O_{L_k}|)\rightarrow 0 \) the proof is complete. 
\end{proof}
The channel coding theorem implies that it is possible to choose a set of ``codewords'' which can be transmitted with high reliability. It is easy to see that for a lossless channel the input entropy $H(X)$ is equal to the mutual information. We may think of this as conservation of entropy or information which justifies the term ``lossless''. Since it is always the case that $H(X)-H(X|Y)=I(X,Y)$ the quantity $H(X|Y)$ can be considered the loss due to the channel. The channel coding theorem is perhaps the most celebrated theorem in Shannon's work although his proof was not rigorous. The algebraic version of the theorem serves two primary purposes. First, we attempt to make the proof as ``algebraic'' as possible. More importantly, it gives us the commutative perspective from which we will seek possible extensions to the non-commutative case. Secondly, the channel map $L$ can be used for a decoding scheme. Thus we may think of a coding-decoding scheme for a given channel as a sequence of pairs $(X_k, L_k)$ as above. 

The coding theorems can be extended to more complicated scenarios like ergodic sources and channels with finite memory. The converse of the channel coding theorem---roughly, any such coding scheme with error tending to 0 (convergence in probability) must have the rate $\log{r_k}/k \leq I$---is also true. We will not pursue these issues further here. But we are confident that these generalizations can be appropriately formulated and proved in the algebraic framework.  
\section{Conclusion and preview of the future work}
In the preceding sections we have laid the basic algebraic framework for information theory. This work was devoted to classical parts of information theory corresponding to abelian algebras. Since information theory relies heavily on probabilistic concepts we devoted a major part of the paper to algebraic probability theory. Although, we often confined our discussion to finite-dimensional algebras corresponding to finite sample spaces it is possible to extend it to infinite-dimensional algebras of continuous sample spaces. In this regard, a natural question is: can the algebraic formulation replace  Kolmogorov axiomatics based on measure theory? Naively, the answer is no because the assumption of a norm-compete algebra imposes the restriction that the random variables that they represent must be {\em bounded}. Moreover, the GNS construction implies that the algebraic framework is essentially equivalent to (almost) bounded random variables on a locally compact space. In order to deal with the unbounded case we have to go beyond the normed algebra structures. A possible course of action is indicated in the examples given in section \ref{sec:examples}: via the use of a ``cut-off''. A more general approach would be to consider sequences which converge in a topology weaker than the norm topology to elements of a larger algebra. These and other related issues on foundations are deep and merit a separate investigation. 

The second major theme of this paper is information theory in the algebraic framework. As some the most important results of information theory concern finite or discrete alphabet we have primarily dealt with these cases only. In this context, we can treat ergodic sources, channels with finite memory and multi-terminal channels. These topics will be  investigated in the future in the non-commutative setting. However, let us recall one of the principal motivation of this paper: the construction of a single framework for dealing with quantum and classical information. We have seen that the algebraic theory in the commutative case already indicates the close analogies between the two cases. We will delve deeper into these analogies and aim to throw light on some basic issues like quantum Huffman coding \cite{Braunstein}, channel capacities and general no-go theorems among others, once we formulate the appropriate models. In this context, let us mention that many investigators have recognized the importance of the algebraic framework but a comprehensive algebraic model which can be extended to infinite-dimensional case is lacking. We aim to address these important issues in subsequent work.

\end{document}